\relax
\documentclass[letterpaper]{article} 
\usepackage{layout}

\usepackage{amsfonts}
\usepackage{amsmath}
\usepackage{amsthm}
\usepackage{amssymb} 

\usepackage{mdframed} 
\usepackage{thmtools}

\definecolor{shadecolor}{gray}{0.95}
\declaretheoremstyle[
headfont=\normalfont\bfseries,
notefont=\mdseries, notebraces={(}{)},
bodyfont=\normalfont,
postheadspace=0.5em,
spaceabove=1pt,
mdframed={
  skipabove=8pt,
  skipbelow=8pt,
  hidealllines=true,
  backgroundcolor={shadecolor},
  innerleftmargin=4pt,
  innerrightmargin=4pt}
]{shaded}

 \usepackage{xcolor}
\usepackage{color}
\usepackage{graphicx}

\usepackage{verbatim}

\newcommand{\R}{\mathbb{R}} 

\newcommand{\cC}{{\cal C}}
\newcommand{\cD}{{\cal D}}

\newcommand{\cL}{{\cal L}}

\newcommand{\cO}{{\cal O}}

\newcommand{\mA}{{\bf A}}
\newcommand{\mB}{{\bf B}}

\newcommand{\mI}{{\bf I}}

\newcommand{\mM}{{\bf M}}

\newcommand{\mW}{{\bf W}}
\newcommand{\mX}{{\bf X}}

\usepackage[colorinlistoftodos,bordercolor=orange,backgroundcolor=orange!20,linecolor=orange,textsize=scriptsize]{todonotes}

\newcommand{\eqdef}{:=}








\newtheorem{assumption}{Assumption}
\newtheorem{lemma}{Lemma}

\newtheorem{theorem}{Theorem}
\newtheorem{proposition}{Proposition}

\theoremstyle{definition}
\newtheorem{definition}[theorem]{Definition}
\theoremstyle{remark}
\newtheorem{remark}{Remark} 
\usepackage{bbm}

\usepackage{aaai20}  
\usepackage{times}  
\usepackage{helvet} 
\usepackage{courier}  
\usepackage{url}  
\usepackage{graphicx} 
\urlstyle{rm} 
\usepackage{graphicx}  
\frenchspacing  
\setlength{\pdfpagewidth}{8.5in}  
\setlength{\pdfpageheight}{11in}  
\usepackage{listings}
\usepackage{enumerate}

\usepackage[colorinlistoftodos,bordercolor=orange,backgroundcolor=orange!20,linecolor=orange,textsize=scriptsize]{todonotes}

\newcommand{\smartparagraph}[1]{\noindent {\bf #1}}

\usepackage{algorithm}
\usepackage{algorithmic}
\usepackage[utf8]{inputenc} 
\usepackage{booktabs}       
\usepackage{amsfonts}       
\usepackage{nicefrac}       
\usepackage{microtype}      

\usepackage[T1]{fontenc}
\usepackage{mwe}    
\usepackage{subcaption}
\usepackage{lineno}
\usepackage{xcolor}
 \newtheorem*{theorem*}{Theorem}
 \newtheorem*{proposition*}{Proposition}
 \newtheorem*{lemma*}{Lemma}

\definecolor{codegreen}{rgb}{0,0.6,0}
\definecolor{codegray}{rgb}{0.5,0.5,0.5}
\definecolor{codepurple}{rgb}{0.58,0,0.82}
\definecolor{backcolour}{rgb}{0.95,0.95,0.92}

\lstdefinestyle{mystyle}{
    backgroundcolor=\color{backcolour},   
    commentstyle=\color{codegreen},
    keywordstyle=\color{magenta},
    numberstyle=\tiny\color{codegray},
    stringstyle=\color{codepurple},
    basicstyle=\ttfamily\footnotesize,
    breakatwhitespace=false,         
    breaklines=true,                 
    captionpos=b,                    
    keepspaces=true,                 
    numbers=left,                    
    numbersep=5pt,                  
    showspaces=false,                
    showstringspaces=false,
    showtabs=false,                  
    tabsize=2
}
 
\lstset{style=mystyle}

 \pdfinfo{
/Title (On the Discrepancy between the Theoretical Analysis and Practical Implementations of Compressed Communication for Distributed Deep Learning)
/Author (}
\author{
Aritra Dutta, El Houcine Bergou, Ahmed M. Abdelmoniem, Chen-Yu Ho, Atal Narayan Sahu, Marco Canini, Panos Kalnis)
} 

\setcounter{secnumdepth}{2} 

\setlength\titlebox{2.5in} 
\title{On the Discrepancy between the Theoretical Analysis and Practical Implementations of Compressed Communication for Distributed Deep Learning}
\author{
Aritra Dutta, El Houcine Bergou\affmark[1], Ahmed M. Abdelmoniem,\\
\Large \textbf{Chen-Yu Ho, Atal Narayan Sahu, Marco Canini, Panos Kalnis}\\
KAUST\\INRA\affmark[1]
}

\author{
Aritra Dutta, El Houcine Bergou\thanks{The author is also with INRA}, Ahmed M. Abdelmoniem,\\
\Large \textbf{Chen-Yu Ho, Atal Narayan Sahu, Marco Canini, Panos Kalnis}\\
KAUST}

\newcommand*{\affmark}[1][*]{\textsuperscript{#1}}

\begin{document}
\maketitle

\begin{abstract}

Compressed communication, in the form of sparsification or quantization of stochastic gradients, is employed to reduce communication costs in distributed data-parallel training of deep neural networks. However, there exists a discrepancy between theory and practice: while theoretical analysis of most existing compression methods assumes compression is applied to the gradients of the entire model, many practical implementations operate individually on the gradients of each layer of the model.

In this paper, we prove that layer-wise compression is, in theory, better, because the convergence rate is upper bounded by that of entire-model compression for a wide range of biased and unbiased compression methods. However, despite the theoretical bound, our experimental study of six well-known methods shows that convergence, in practice, may or may not be better, depending on the actual trained model and compression ratio. Our findings suggest that it would be advantageous for deep learning frameworks to include support for both layer-wise and entire-model compression. 
\end{abstract}

\section{Introduction}

Despite the recent advances in deep learning and its wide-spread transformative successes, training deep neural networks (DNNs) remains a computationally-intensive and time-consuming task. The continuous trends towards larger volumes of data and bigger DNN model sizes require to scale out training by parallelizing the optimization algorithm across a set of workers.
The most common scale out strategy is data parallelism where each worker acts on a partition of input data. In each iteration of the optimization algorithm -- typically the stochastic gradient descent (SGD) algorithm -- every worker processes a
mini-batch of the input data and produces corresponding stochastic gradients.
Then, gradients from all workers are aggregated to produce an update to model parameters to be applied prior to the next iteration.
The gradient aggregation process involves network communication and is supported via a parameter-server architecture or collective communication routines (e.g., {\tt all\_reduce}).

A major drawback of distributed training across parallel workers is that training time can negatively suffer from high communication costs, especially at large scale, due to the transmission of stochastic gradients. To alleviate this problem, several lossy compression techniques have been proposed \cite{DBLP:conf/interspeech/SeideFDLY14,Dettmers15,alistarh2017qsgd,terngrad,Alistarh18_sparse,signsgd,Tang_2019,WangniWLZ18,cnat}. 

Two main classes of compression approaches are sparsification and quantization.
{\em Sparsification} communicates only a subset of gradient elements.
For instance, this is obtained by selecting uniformly at random $k\%$ elements or the top $k\%$ elements by magnitude~\cite{Alistarh18_sparse}.
{\em Quantization}, on the other hand, represents gradient elements with lower precision, thus using fewer bits for each element. For instance, this is done by transmitting only the sign of each element~\cite{signsgd}, or by randomized rounding to a discrete set of values~\cite{alistarh2017qsgd}.

In theory, such methods can reduce the amount of communication and their analysis reveal that they provide convergence guarantees (under certain analytic assumptions). Further, such methods preserve model accuracy across a range of settings in practice.

However, we observe a discrepancy between the theoretical analysis and practical implementation of existing compression methods.
To the best of our knowledge, the theoretical analysis of every prior method appears to assume that compression is applied to the gradient values of the \emph{entire model}. However, from our study of existing implementations \cite{poseidon,Shi2017,Lim2019,Shi2019,cnat} and experience with implementing compression methods, we observe that compression is applied {\em layer by layer}, as illustrated in Figure~\ref{fig:layerwise_architec}.
In fact, based on the existing programming interfaces in modern distributed machine learning toolkits such as PyTorch \cite{pytorch} and TensorFlow \cite{tensorflow}, a layer-wise implementation is typically most straightforward because wait-free backpropagation~\cite{poseidon} -- where gradients are sent as soon as they are available -- is a commonly used optimization.

\begin{figure}[t]
    \centering
    \includegraphics[width=0.4\textwidth]{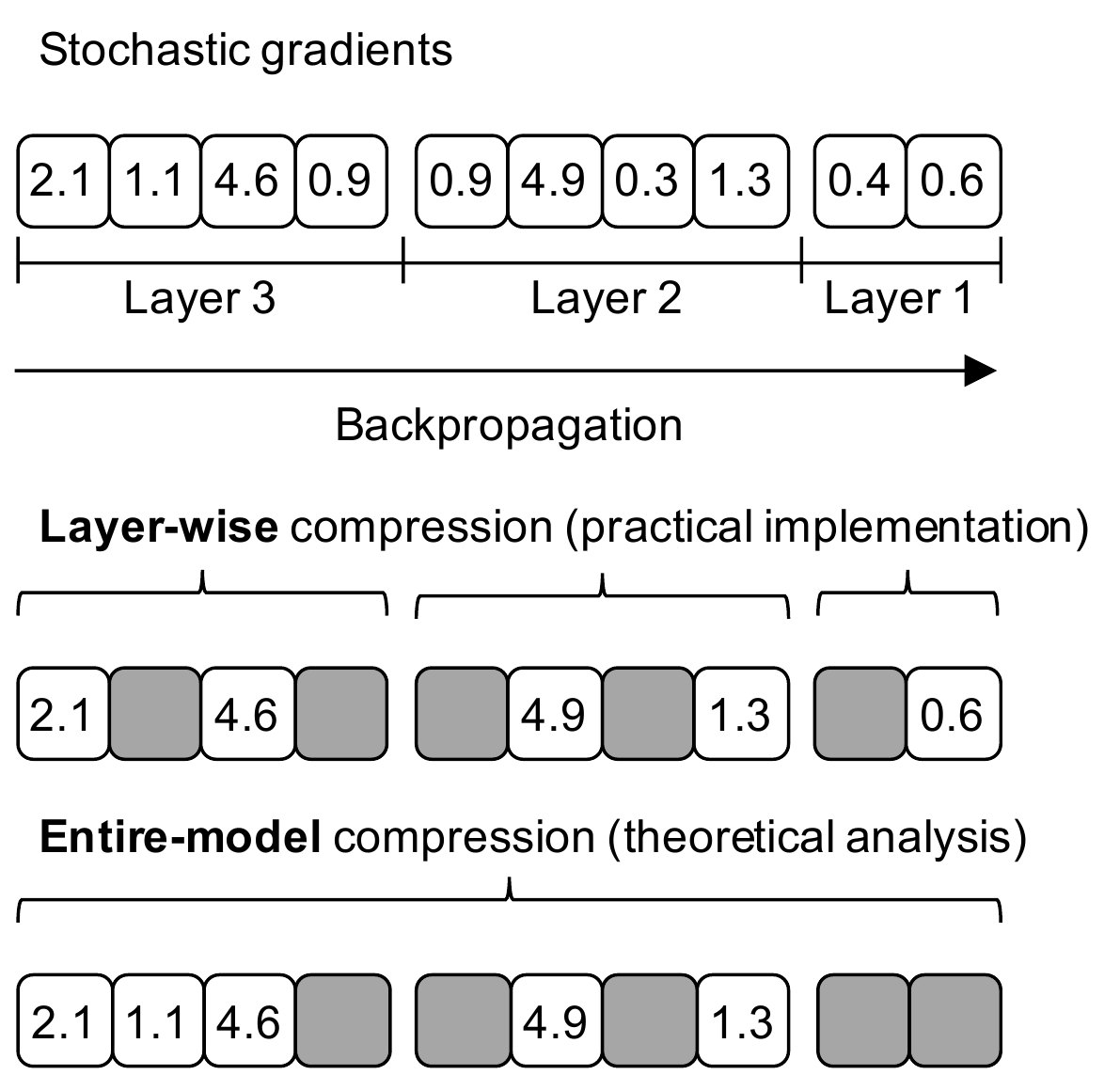}
    \caption{\small{Contrived example of compressed communication illustrating how layer-wise Top $k$ compression (with $k=50\%$) differs from entire-model compression.}}
    \label{fig:layerwise_architec}
\end{figure}

Importantly, layer-wise compression in general differs from entire-model compression (though for certain quantization methods the results are identical). For example, Figure~\ref{fig:layerwise_architec} shows the effects of Top $k$ with a sparsification ratio of 50\%, highlighting that when entire-model compression is used, no gradient for the last layer is transmitted at that specific step, which may affect convergence. This suggests that the choice of a compression approach may warrant careful consideration in practice.

In particular, this discrepancy has important implications that motivate this paper.
First, since implementation artifacts differ from what has been theoretically analyzed, do theoretical guarantees continue to hold?
Second, how does the convergence behavior in the layer-wise compression setting theoretically compare to entire-model compression?
Third, in practice, are there significant differences in terms of convergence behavior when entire-model or layer-wise compression is applied? And if so, how do these differences vary across compression methods, compression ratios, and DNN models?
To the best of our knowledge, this is the first paper to observe the above discrepancy and explore these questions.
To answer these questions, this paper makes the following contributions.

\smartparagraph{Layer-wise bidirectional compression analysis:} 
We introduce a unified theory of convergence analysis for distributed SGD with {\em layer-wise} compressed communication. Our analysis encompasses the majority of existing compression methods and applies to both biased (e.g., Top $k$, Random $k$, signSGD~\cite{signsgd}) and unbiased methods (e.g., QSGD~\cite{alistarh2017qsgd}, TernGrad~\cite{terngrad}, $\cC_{\rm NAT}$~\cite{cnat}). Additionally, our analysis considers {\em bidirectional compression}, that is, compression at both the worker side and parameter server side, mimicking the bidirectional
strategy used in several compression methods~\cite{signsgd,cnat}. Our theoretical analysis gives a proof of {\em tighter convergence bounds} for layer-wise compression as compared to entire-model compression.

\smartparagraph{Evaluation on standard benchmarks:}
We confirm our analytical findings by empirically evaluating a variety of compression methods (Random $k$, Top $k$, TernGrad, Adaptive Threshold, Threshold $v$, and QSGD) with standard CNN benchmarks~\cite{DAWNBench} for a range of models (AlexNet, ResNet-9, and ResNet-50) and datasets (CIFAR-10 and ImageNet). We mainly find that, in many cases, layer-wise compression is better or comparable to entire-model compression in terms of test accuracy at model convergence. However, despite the theoretical findings, our empirical results reveal that in practice there are cases, such as the Top $k$ method with small sparsification ratio $k$ and small model sizes, where layer-wise compression performs worse than entire-model compression. This suggests that the current practice of implementing compression methods in a layer-wise fashion out of implementation expedience may not be optimal in all cases. Thus, it would be advantageous for distributed deep learning toolkits to include support for both layer-wise and entire-model compression.

\section{Preliminaries}

Distributed DNN training builds on the following optimization problem:
\begin{eqnarray}\label{prblm:main_prblm}
\textstyle \min_{x\in \R^d}f(x)\eqdef\frac{1}{n}\min_{x\in \R^d}\sum_{i=1}^n\underbrace{\mathbb{E}_{\xi\sim\cD_i}F_i(x,\xi)}_{:= f_i(x)}. \end{eqnarray}
 Without loss of generality, consider the above problem as a classical empirical risk minimization problem over $n$ workers, where $\cD_i$ is the local data distribution for worker $i$, $\xi$ is a random variable referring to a sample data. These problems typically arise in deep neural network training in the synchronous data-parallel distributed setting, where each worker has a local copy of the DNN model. Each worker uses one of $n$ non-intersecting partitions of the data, $D_i$, to jointly update the model parameters $x\in\R^d$, typically the weights and biases of a DNN model. In this paradigm, the objective function $f(x)$ is non-convex but has Lipschitz-continuous gradient. One of the most popular algorithms for solving (\ref{prblm:main_prblm}) is the stochastic gradient descent (SGD) algorithm~\cite{robins_monro}. For a sequence of iterates $\{x_k\}_{k\ge 0}$ and a step-size parameter $\eta_k>0$ (also called learning rate), SGD iterates are of the form: 
$x_{k+1} = x_k - \eta_kg(x_k)$, where $g(x_k)$ is an unbiased estimator of the gradient of $f$, that is, for a given $x_k$ we have $\textstyle{\mathbb{E}(g(x_k))=\nabla f(x_k).}$ 

\smartparagraph{Notations.}
We write the matrices in bold uppercase letters and denote vectors and scalars by simple lowercase letters. We denote a vector norm of $x\in\R^d$ by $\|x\|$, the $\ell_1$-norm by $\|x\|_1$, the $\ell_2$-norm by $\|x\|_2$, and for a positive definite matrix $\mM$, we define $\|x\|_\mM \eqdef \sqrt{x^\top \mM x}.$ By $x_k^{i,j}$, we denote a vector that results from the $k^{\rm th}$ iteration, at the $i^{\rm th}$ worker, and represents the $j^{\rm th}$ layer of the deep neural network. Similar notation follows for the stochastic gradients. When $j$ is implied, we simplify the notation as $x_k^{i}$ and vice-versa. Also, by $x_k^{1:n}$ we denote a collection of $n$ vectors $x_k^{i}$, where $i=1,\ldots, n$.
Further, for the ease of notation, denote $f(x_k)=f_k$.

\section{Layer-wise Gradient Compression}
\label{sec:algo}
We define a general bidirectional compression framework that is instantiated via two classes of user-provided functions: (1) a compression operator $Q_W$ at each {\em worker} (which for simplicity, we assume is the same at each worker and for every layer), and (2) a compression operator $Q_M$ at the {\em master} node. The master node abstracts a set of parameter servers. We now introduce the framework and then formalize the setup under which we analyze it. Our analysis follows in the next section.

Conceptually, based on its local copy of the model at step $k$, each worker first computes the local stochastic gradient $g_k^{i,j}$ of each layer $j$ (from 1 to $L$) and then performs layer-wise compression to produce $\tilde{{g}}_{k}^{i,j} = Q_{W|j}(g_k^{i,j})$. After that, each worker transmits $\tilde{{g}}_{k}^{i,j}$ to the master. The master collects all the gradients from the workers, aggregates them (via averaging), and then uses the compression operator $Q_M$ to generate $\textstyle{\tilde{g}_k^j\eqdef Q_{M|j}(\frac{1}{n}\sum_{i=1}^n \tilde{{g}}_{k}^{i,j}})$. The master then broadcasts the results back to all workers. Each worker recovers the entire-model gradient $\tilde{g}_k$ by collating the aggregated gradient of each layer $\tilde{g}_k^j$ and then updates the model parameters via the following rule (where $\eta$ is the learning rate): 
\begin{eqnarray}\label{iter:sgd}
\textstyle x_{k+1}=x_k-\eta_k\tilde{g}_k.
\end{eqnarray} 
This process continues until convergence.
Algorithm~\ref{alg_1} lists the steps of this process. 

\begin{algorithm}[tb]
\caption{Layer-wise gradient compression framework}
\label{alg_1}
\textbf{Input:} Number of workers $n$, learning rate $\eta$, compression operators $Q_W$ (worker side) and $Q_M$ (master side)\\
\textbf{Output:} The trained model $x$
\begin{algorithmic}[1] 
\STATE \textbf{On} each worker $i$:
\FOR{$k = 0,1,\ldots$}
\STATE \textbf{Calculate} stochastic gradient ${g}_{k}^{i,j}$ of each layer $j$
\STATE $\tilde{{g}}_{k}^{i,j} = Q_{W|j}(g_{k}^{i,j})$
\STATE \textbf{Send} compressed gradient $\tilde{{g}}_{k}^{i,j}$
\STATE \textbf{Receive} aggregated gradient $\tilde{{g}}_k^j$
\STATE \textbf{Collate} entire-model gradient $\tilde{{g}}_k = \tilde{{g}}_k^{1:L}$
\STATE ${x}_{k+1} = {x}_k - \eta_k\tilde{{g}}_k$
\ENDFOR
\RETURN ${x}$
\end{algorithmic}
\begin{algorithmic}[1]
\STATE \textbf{On} master node, at each step $k$ and for each layer $j$:
\STATE \textbf{Receive} $\tilde{{g}}_k^{i,j}$ from each worker
\STATE $\tilde{{g}}_k^j = Q_{M|j}(\frac{1}{n}\sum_{i=1}^n\tilde{{g}}_k^{i,j})$
\STATE \textbf{Broadcast} aggregated compressed gradient $\tilde{{g}}_k^j$
\end{algorithmic}
\end{algorithm}

We note that this framework is agnostic to the optimization algorithm. We consider SGD in this paper. However, given access to $x_k^{i,j}$ and $\tilde{g}_k^j$, Algorithm \ref{alg_1} can be adapted to any other popular optimizer used to train DNNs, such as ADAM~\cite{adam}, ADAGrad~\cite{DBLP:journals/jmlr/DuchiHS11} or RMSProp.

Moreover, the framework supports different compression operators at the worker side and master side as our general theory supports it. In the limit, the compression operator may also differ between layers, including the identity function as an operator for specific layers to avoid compressing those. This is also covered by our theory.

Finally, while we cast our framework on the parameter-server architecture, it is easy to see that it generalizes to collective routines (specifically, {\tt all\_reduce}) since in that case, there is no master and this behavior is modeled by taking $Q_M$ as the identity function.

\subsection{Setup}
\label{sec:setup} 
We now formalize the above concepts and state
the general assumptions we make (several of which are classical ones).
\begin{assumption}(Lower bound) 
The function $f$ is lower bounded; that is, there exists an $f^\star \in\R$ such that $f(x)\ge f^\star$, for all $x$. 
\end{assumption}
\begin{assumption}($\cL $-smoothness) \label{assm:smoothness}
The function $f$ is $\cL $ smooth if its gradient is $\cL $-Lipschitz continuous, that is, for all $x,y\in\R^d$, $\textstyle{\|\nabla f(x)-\nabla f(y)\|\le \cL \|x-y\|}$.  
\end{assumption}
\begin{assumption}(Unbiasedness of stochastic gradient)\label{assum:Unbiased_grad}
The stochastic gradient is unbiased, that is,
\begin{eqnarray}
\textstyle\mathbb{E}_{}(g_k|x_k)=\nabla f_k.
\end{eqnarray} 
\end{assumption} 
If one assumes that the stochastic gradient has bounded variance denoted as $\Sigma$, then, for a given symmetric positive definite (SPD) matrix $\mA$, one has 
\begin{eqnarray*}
\textstyle \mathbb{E}(\|g_k\|_{\bf A}^2|x_k)={\rm Trace}(\mA \Sigma)+\|\nabla f_k\|_{\bf A}^2,
\end{eqnarray*} 
where ${\rm Trace}(\mX)$ denotes the sum of the diagonal elements of a matrix $\mX$. 
A relaxed assumption of the bounded variance is the {\em strong growth condition on stochastic gradient}.
\begin{assumption}(Strong growth condition on stochastic gradient)\label{assm:bdd_var}
For a given SPD matrix $\mA$, a general strong growth condition with an additive error is 
\begin{eqnarray}\label{eq:gen_strong_growth_cond}
\textstyle \mathbb{E}(\|g_k\|_{\bf A}^2|x_k)\le \rho\|\nabla f_k\|_{\bf A}^2+\sigma^2,
\end{eqnarray}
where $\rho>0$ and $\sigma>0$.
\end{assumption}
A similar assumption was proposed in \cite{vaswani2018fast,survey_largescaleML,bertsekas} when $\mA$ is the identity matrix, that is, for the $\ell_2$-norm. For overparameterized models such as DNNs, it is common practice to assume $\sigma=0$; and the condition says that the growth of stochastic gradients is {\em relatively} controlled by the gradient $\nabla f_k$~\cite{vaswani2018fast}. That is, there exists a $\rho>0$ such that 
\begin{eqnarray*}
 \mathbb{E}(\|g_k\|_{\bf A}^2)\le \rho\|\nabla f_k\|_{\bf A}^2. 
\end{eqnarray*}

Before defining compression operators formally, below we introduce an assumption that compressor operators should obey. Consider a compression operator $Q(\cdot):\R^d\to\R^d$.
\begin{assumption}(Compression operator) \label{assum:compression}
For all vectors $x\in\R^d$ 
the compression operator $Q(\cdot)$ satisfies 
\begin{eqnarray}\label{eq:kcontrac}
\textstyle \mathbb{E}_{Q}\|Q(x)\|_2^2\le (1+\Omega)\|x\|_2^2.
\end{eqnarray}
where the expectation $\mathbb{E}_{Q}(\cdot)$ is taken over the internal randomness of the operator $Q(\cdot)$ and $\Omega >0.$
\end{assumption}
\begin{remark}
A broad range of compression operators, whether biased or unbiased, satisfy Assumption \ref{assum:compression}.
In particular, existing compression operators such as Random $k$, Top $k$, signSGD, unbiased Random $k$, QSGD, $\cC_{\rm NAT}$, TernGrad, stochastic rounding, adaptive compression, respect this assumption.
Moreover, if there is no compression, then $\Omega=0.$
\end{remark}

Following \cite{stochastic_threepoint}, we generalize their assumption (c.f. Assumption 3.3) that imposes a descent property to the stochastic gradient.
This assumption lower bounds the expected inner product of the stochastic gradient $\tilde{g}_k$ with the gradient $\nabla f_k$ with a positive quantity depending on a power of the gradient norm while allowing a small residual on the lower bound.
\begin{assumption}\label{assum:decrease-second}
There exists $0<\alpha \le 2$ such that 
\begin{equation}  
\textstyle \mathbb{E} \left[\tilde{g}_k^\top \nabla f_k\right] \ge \mathbb{E} \|\nabla f_k\|^\alpha +  R_k,
\end{equation}
where $\|\cdot\|$ is a vector norm in $\R^d$ and $R_k$ is a small scalar residual which may appear due to the numerical inexactness of some operators or due to other computational overheads. 
\end{assumption}
 By setting $\alpha=1$ and $R_k = 0$, we recover the key assumption made by \cite{stochastic_threepoint}.

In light of our framework and the assumptions made, we now define a general {\em layer-wise compression operator}.
\begin{definition}({Layer-wise compression operator.}) Let $Q(\cdot):\R^d\to\R^d$ be a layer-wise compression operator such that $Q\eqdef (Q_1\;Q_2\cdots Q_L),$ where each $Q_j(\cdot):\R^{d_j}\to\R^{d_j}$, for $j=1,2,\cdots, L$ with $\sum_{j=1}^Ld_j=d$ be a compression operator. 
\end{definition}
The following lemma characterizes the compression made by biased layer-wise compression operators. 
\begin{lemma}\label{lemma:layer_comp}
Let $Q(\cdot):\R^d\to\R^d$ be layer-wise biased compression operator with $Q\eqdef (Q_1\;Q_2\cdots Q_L)$, such that, each $Q_j(\cdot):\R^{d_j}\to\R^{d_j}$ for $j=1,2,\cdots, L$ satisfies Assumption \ref{assum:compression} with $\Omega=\Omega_j$. Then we have 
\begin{eqnarray}\label{eq:q_sum}
\textstyle\mathbb{E}_Q\left(\|Q(x)\|_2^2\right)&\le&\sum_{1\le j\le L}(1+\Omega_j)\|x^j\|_2^2\nonumber\\&\le&\max_{1\le j\le L}(1+\Omega_j)\|x\|_2^2.
\end{eqnarray}
\end{lemma}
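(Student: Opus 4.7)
The plan is to exploit the block structure of $Q$ to reduce everything to the per-layer guarantee provided by Assumption \ref{assum:compression}, and then pass from a per-layer sum to a bound in terms of $\|x\|_2^2$ via a trivial max argument.

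First I would write $x\in\R^d$ in its natural block form $x=(x^1,\dots,x^L)$ with $x^j\in\R^{d_j}$, so that by the definition of the layer-wise operator $Q(x)=(Q_1(x^1),Q_2(x^2),\dots,Q_L(x^L))$. Because the Euclidean norm on $\R^d=\R^{d_1}\oplus\cdots\oplus\R^{d_L}$ decomposes block-wise, I get the identity
\begin{equation*}
\|Q(x)\|_2^2 \;=\; \sum_{j=1}^{L}\|Q_j(x^j)\|_2^2.
\end{equation*}
Taking expectation and using linearity of expectation (no independence across layers is needed here, since the bound we want is on the sum) gives $\mathbb{E}_Q\|Q(x)\|_2^2=\sum_{j=1}^{L}\mathbb{E}_{Q_j}\|Q_j(x^j)\|_2^2$.

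Next I would apply Assumption \ref{assum:compression} to each block compressor $Q_j$ with parameter $\Omega_j$, which yields $\mathbb{E}_{Q_j}\|Q_j(x^j)\|_2^2\le (1+\Omega_j)\|x^j\|_2^2$. Summing these $L$ inequalities immediately delivers the first claimed bound
\begin{equation*}
\mathbb{E}_Q\|Q(x)\|_2^2 \;\le\; \sum_{j=1}^{L}(1+\Omega_j)\|x^j\|_2^2.
\end{equation*}

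For the second inequality, I would pull out the worst layer: since $1+\Omega_j\le \max_{1\le \ell\le L}(1+\Omega_\ell)$ for each $j$, and since $\sum_{j=1}^{L}\|x^j\|_2^2=\|x\|_2^2$ by the same block decomposition of the norm, I conclude
\begin{equation*}
\sum_{j=1}^{L}(1+\Omega_j)\|x^j\|_2^2 \;\le\; \max_{1\le j\le L}(1+\Omega_j)\sum_{j=1}^{L}\|x^j\|_2^2 \;=\; \max_{1\le j\le L}(1+\Omega_j)\,\|x\|_2^2,
\end{equation*}
which completes the chain. The argument is essentially bookkeeping; the only genuine step of substance is noting that the squared $\ell_2$-norm splits exactly across the layer blocks, which is what lets the per-layer compression bound compose cleanly into a global bound. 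There is no real obstacle beyond being careful that the per-layer bound in the first inequality is tighter than the uniform $(1+\max_j\Omega_j)\|x\|_2^2$ bound, since the former preserves the individual layer norms and this will be what enables the tighter convergence guarantees advertised later in the paper.
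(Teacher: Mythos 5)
Your proof is correct and follows essentially the same route as the paper's: both exploit the block decomposition $\|Q(x)\|_2^2=\sum_{j}\|Q_j(x^j)\|_2^2$, apply Assumption \ref{assum:compression} layer by layer, and then bound the weighted sum by $\max_{j}(1+\Omega_j)\|x\|_2^2$. In fact, you spell out the intermediate steps (linearity of expectation and the identity $\sum_j\|x^j\|_2^2=\|x\|_2^2$) that the paper's one-line proof leaves implicit.
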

With the above general setup, we next establish convergence of Algorithm \ref{alg_1}.

\vspace{-6pt}
\section{Convergence Analysis}
\label{sec:conv} 
We now establish the convergence of the above-defined layer-wise bidirectional compression scheme. The proofs are available in Appendix~\ref{app:proofs}.
Let the matrix $\textstyle{{\bf W}_W\eqdef{\rm diag}((1+\Omega_W^1)\mI_1\;(1+\Omega_W^2)\mI_2\cdots (1+\Omega_W^L)\mI_L )}$ be a diagonal matrix that characterizes the layer-wise compression at each {\em worker}, such that for each $j=1,2,\cdots, L$, $\mI_j$ be a $d_j\times d_j$ identity matrix. Similarly, to characterize the layer-wise compression at the {\em master} node, we define $\textstyle {{\bf W}_M}$. Given that $\Omega_W^j, \Omega_M^j\ge0$ for each $j=1,2,\cdots, L$, therefore, ${\bf W}_W$ and ${\bf W}_M$ are SPD matrices. Denote $\Omega_M\eqdef \max_{1\le j\le L} \Omega_M^j$, $\Omega_W\eqdef \max_{1\le j\le L} \Omega_W^j$. Further define $\textstyle{\mA\eqdef\mW_M \mW_W}$.

In the next lemma, we consider several compression operators that satisfy Assumption \ref{assum:decrease-second}. For instance, these include unbiased compression operators, as well as Random $k$ and signSGD. 
\begin{lemma}\label{lemma:characterize_compression}
We note the following:
\begin{enumerate}[i.]
\item (For unbiased compression) If $\tilde{g}_k$ is unbiased (the case when $Q_M$ and $Q_W$ are unbiased), then
\begin{equation}
\mathbb{E} \left[\tilde{g}_k^\top \nabla f_k\right] = \mathbb{E} \|\nabla f_k\|_2^2. 
\end{equation}
Therefore, $\tilde{g}_k$ satisfies  Assumption~\ref{assum:decrease-second} with $\alpha=2$, $\|\cdot\|= \|\cdot\|_2$ and $ R_k = 0$.

\item  If $Q_M$ and $Q_W$ are the Random $k$ compression operator with sparsification ratios $k_M$ and $k_W$, respectively, then
\begin{equation}
\textstyle \mathbb{E} \left[\tilde{g}_k^\top \nabla f_k\right] = \frac{k_M k_W}{d^2} \mathbb{E} \|\nabla f_k\|_2^2.
\end{equation}
Therefore, $\tilde{g}_k$ satisfies  Assumption~\ref{assum:decrease-second} with $\alpha=2$, $\|\cdot\|=\frac{k_M k_W}{d^2} \|\cdot\|_2$, and $ R_k = 0$.

\item Let $Q_M$ be the layer-wise Random ${k_{M_j}}$ compression operator for each layer $j$. Similarly, $Q_W$ is the layer-wise Random ${k_{W_j}}$ compression operator for each layer $j$, then
\begin{equation}
\textstyle \mathbb{E} \left[\tilde{g}_k^\top \nabla f_k\right] = \mathbb{E} \|\nabla f_k\|_{\mB}^2, 
\end{equation}
where $\textstyle{\mB= {\rm diag}\left(\frac{k_{M_1} k_{W_1}}{d_1^2}\mI_1\;\frac{k_{M_2} k_{W_2}}{d_2^2}\mI_2\cdots \frac{k_{M_L} k_{W_L}}{d_L^2}\mI_L \right)}$.
Therefore, $\tilde{g}_k$ satisfies  Assumption~\ref{assum:decrease-second} with $\alpha=2$, $\|\cdot\|= \|\cdot\|_{\mB}$ and $ R_k = 0$.

\item  Let $Q_M$ and $Q_W$ be the {\rm sign} function, similar to that in signSGD, then
\begin{equation}
\textstyle \mathbb{E} \left[\tilde{g}_k^\top \nabla f_k\right] \ge \mathbb{E} \|\nabla f_k\|_1 +  R_k. 
\end{equation}
Therefore, $\tilde{g}_k$ satisfies Assumption~\ref{assum:decrease-second} with  $\alpha=1$, $\|\cdot\|= \|\cdot\|_{1}$, and $R_k = \cO\left(\frac{1}{BS}\right)$, where $BS$ is the size of used batch to compute the signSGD.
\end{enumerate}
\end{lemma}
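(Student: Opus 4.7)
All four claims share a common starting point: by the tower property and the fact that $\nabla f_k$ is $x_k$-measurable,
\[
\mathbb{E}\!\left[\tilde g_k^\top \nabla f_k\right] \;=\; \mathbb{E}\!\left[\,\mathbb{E}[\tilde g_k \mid x_k]^\top \nabla f_k\,\right],
\]
so the task reduces to evaluating or lower bounding the conditional mean of $\tilde g_k = Q_M\!\bigl(\tfrac{1}{n}\sum_{i=1}^n Q_W(g_k^i)\bigr)$ in each case. Parts (i)--(iii) are then direct, while part (iv) is the main obstacle.

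\textbf{Parts (i)--(iii).} In each of these cases $Q_M$ and $Q_W$ are linear in expectation, so the proof is a telescoping over the sources of randomness. For (i), unbiasedness of $Q_M$, then $Q_W$, then the stochastic gradients $g_k^i$ (Assumption \ref{assum:Unbiased_grad}), combined with $f = \tfrac{1}{n}\sum_i f_i$, gives $\mathbb{E}[\tilde g_k \mid x_k] = \nabla f_k$ and hence the inner product equals $\|\nabla f_k\|_2^2$. For (ii), one uses the coordinate-wise identity $\mathbb{E}_Q[Q(y)] = (k/d)\,y$ (each coordinate survives Random $k$ with probability $k/d$), which telescopes to $\mathbb{E}[\tilde g_k \mid x_k] = (k_M k_W / d^2)\,\nabla f_k$. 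For (iii), the same identity applied per layer produces the block-diagonal scaling $\mathbf{B}$, and the equality $\mathbb{E}[\tilde g_k^\top \nabla f_k] = \mathbb{E}\|\nabla f_k\|_{\mathbf{B}}^2$ is then immediate. Casting each result in the form of Assumption \ref{assum:decrease-second} with $\alpha = 2$, the stated norm, and $R_k = 0$ finishes these parts.

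\textbf{Part (iv).} Here the sign operator is neither linear nor unbiased, so I would follow the signSGD-style analysis of Bernstein et al. For each coordinate $j$, I would start from
\[
\mathbb{E}\!\left[\operatorname{sign}(g_{k,j}) \mid x_k\right] \cdot (\nabla f_k)_j \;=\; \bigl|(\nabla f_k)_j\bigr|\bigl(1 - 2 P_{\mathrm{err},j}\bigr),
\]
where $P_{\mathrm{err},j} := \Pr[\operatorname{sign}(g_{k,j}) \neq \operatorname{sign}((\nabla f_k)_j) \mid x_k]$. Under a bounded-variance condition on the mini-batch stochastic gradient with batch size $BS$, a Chebyshev (or Gaussian tail) bound controls $|(\nabla f_k)_j|\cdot P_{\mathrm{err},j}$ by a term of order $\cO(1/BS)$; summing over $j$ gives a lower bound on $\mathbb{E}[\operatorname{sign}(g_k)^\top \nabla f_k]$ by $\|\nabla f_k\|_1$ plus an error. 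The outer $\operatorname{sign}$ applied by the master to the averaged workers' sign vectors is then handled by a majority-vote concentration: the sign of $\tfrac{1}{n}\sum_i \operatorname{sign}(g_k^{i})_j$ agrees with $\operatorname{sign}((\nabla f_k)_j)$ with high probability, and the residual contribution is again absorbed into $R_k$. Collecting the inner-sign and outer-sign error terms yields the claimed $R_k = \cO(1/BS)$.

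\textbf{Main difficulty.} The delicate part is (iv): keeping the inner sign (per worker, over the stochastic sample) and the outer sign (at the master, over workers) separate, and verifying that the bidirectional composition does not degrade the residual beyond the claimed $\cO(1/BS)$ order. Parts (i)--(iii) are essentially exercises in the tower property once the telescoping over $Q_M$, $Q_W$, and the stochastic gradients is set up.
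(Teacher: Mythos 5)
For parts (i)--(iii) your argument is essentially identical to the paper's: condition on $x_k$, peel off the randomness of $Q_M$, then $Q_W$, then the stochastic gradient sampling, using the Random~$k$ survival identity $\mathbb{E}_Q[Q(y)] = (k/d)\,y$ (per layer in case (iii)), and finish with the tower property. The only real difference is in part (iv): the paper does \emph{not} prove this case at all --- it states that $\tilde g_k = \operatorname{sign}\bigl(\tfrac{1}{n}\sum_i \operatorname{sign}(g_k^i)\bigr)$ is deterministic and biased, and then simply quotes the inequality ``as in'' the signSGD reference of Bernstein et al. You instead unpack the argument behind that citation: the per-coordinate identity $\mathbb{E}[\operatorname{sign}(g_{k,j})\mid x_k]\,(\nabla f_k)_j = |(\nabla f_k)_j|(1-2P_{\mathrm{err},j})$, a tail bound on the sign-error probability, and a majority-vote concentration step for the outer sign at the master. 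This is the correct route and makes the lemma self-contained where the paper is not; it also correctly separates the two layers of sign compression, which the paper's one-line citation glosses over entirely.

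One caveat on your part (iv), which you should be aware of even though it does not put you at odds with the paper: the claim that a Chebyshev bound controls $|(\nabla f_k)_j|\,P_{\mathrm{err},j}$ at order $\cO(1/BS)$ \emph{uniformly in $j$} is delicate. Chebyshev gives $|(\nabla f_k)_j|\,P_{\mathrm{err},j} \le \sigma_j^2/\bigl(BS\,|(\nabla f_k)_j|\bigr)$, which blows up when the gradient coordinate is small; the bound that holds uniformly (Markov on the mean absolute deviation, as in Bernstein et al.) is $|(\nabla f_k)_j|\,P_{\mathrm{err},j} \le \sigma_j/\sqrt{BS}$, i.e.\ a residual of order $\cO(1/\sqrt{BS})$. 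Recovering a genuine $\cO(1/BS)$ residual requires the stronger unimodal-symmetric-noise assumption of the signSGD paper together with a case split on the size of $|(\nabla f_k)_j|$. Since the paper itself asserts $R_k = \cO(1/BS)$ on the strength of the same citation, your statement matches the lemma as written; but if you want your sketch to be airtight, either state the residual as $\cO(1/\sqrt{BS})$ or import the symmetric-unimodal noise hypothesis explicitly.
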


Similar to the cases mentioned in Lemma 2, we can characterize several other well-known compression operators or their combinations. Our next lemma gives an upper bound on the compressed stochastic gradient $\tilde{g}_k$.  
\begin{lemma}\label{lemma:secondorderterm}
Let Assumption \ref{assm:bdd_var} hold. 
With the notations defined above, we have
\begin{equation}\label{eq:secondorderterm}
\textstyle \mathbb{E} \|\tilde{g}_k\|_2^2 \le \rho \|\nabla f_k\|_{\bf A}^2 + \sigma^2.
\end{equation}
\end{lemma}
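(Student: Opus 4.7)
The plan is to peel off the two layers of compression layer-by-layer and then apply the strong-growth condition. I would start by decomposing the squared norm across layers: since $\tilde{g}_k = (\tilde{g}_k^{1}, \dots, \tilde{g}_k^{L})$ and the blocks are disjoint, $\|\tilde{g}_k\|_2^2 = \sum_{j=1}^L \|\tilde{g}_k^j\|_2^2$. For each $j$, $\tilde{g}_k^j = Q_{M|j}\bigl(\tfrac{1}{n}\sum_i \tilde{g}_k^{i,j}\bigr)$, so Assumption~\ref{assum:compression} applied to $Q_{M|j}$ (conditioning on everything computed before the master-side compression) gives
\begin{equation*}
\mathbb{E}_{Q_M}\|\tilde{g}_k^j\|_2^2 \;\le\; (1+\Omega_M^j)\,\Bigl\|\tfrac{1}{n}\textstyle\sum_{i=1}^n \tilde{g}_k^{i,j}\Bigr\|_2^2.
\end{equation*}

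Next I would push the average inside the norm via convexity, $\|\tfrac{1}{n}\sum_i y^i\|_2^2 \le \tfrac{1}{n}\sum_i \|y^i\|_2^2$, and then apply Assumption~\ref{assum:compression} a second time, now to the worker-side operator $Q_{W|j}$ on each $g_k^{i,j}$, to obtain $\mathbb{E}_{Q_W}\|\tilde{g}_k^{i,j}\|_2^2 \le (1+\Omega_W^j)\|g_k^{i,j}\|_2^2$. Taking full expectation and chaining these bounds yields, for every layer,
\begin{equation*}
\mathbb{E}\|\tilde{g}_k^j\|_2^2 \;\le\; (1+\Omega_M^j)(1+\Omega_W^j)\,\tfrac{1}{n}\textstyle\sum_{i=1}^n \mathbb{E}\|g_k^{i,j}\|_2^2.
\end{equation*}

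The third step is to recognize the $\mA$-norm in the layer sum. Because $\mA = \mW_M\mW_W$ is block-diagonal with the $j$th block equal to $(1+\Omega_M^j)(1+\Omega_W^j)\mI_j$, summing the previous display over $j$ gives $\mathbb{E}\|\tilde{g}_k\|_2^2 \le \tfrac{1}{n}\sum_i \mathbb{E}\|g_k^i\|_\mA^2$. Finally I would invoke the strong growth condition (Assumption~\ref{assm:bdd_var}) applied to each worker's stochastic gradient with SPD matrix $\mA$ to bound $\mathbb{E}\|g_k^i\|_\mA^2 \le \rho\|\nabla f_k\|_\mA^2 + \sigma^2$, and averaging over $n$ workers preserves the bound, delivering (\ref{eq:secondorderterm}).

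The main obstacle I anticipate is bookkeeping the two nested sources of randomness cleanly, namely the internal randomness of $Q_W$ and $Q_M$ versus the stochastic sampling, so that each application of Assumption~\ref{assum:compression} is invoked conditionally on the right $\sigma$-algebra (worker compression before master compression, both before taking expectation over the sampled $g_k^{i,j}$). A secondary subtlety is that Assumption~\ref{assm:bdd_var} as stated is a condition on the generic stochastic gradient $g_k$; applying it per-worker is the natural reading in the homogeneous data-parallel setting of problem (\ref{prblm:main_prblm}), and this is the implicit convention used throughout.
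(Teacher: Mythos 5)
Your proposal is correct and follows essentially the same route as the paper's proof: master-side compression bound applied layer-wise (which is exactly the content of Lemma~\ref{lemma:layer_comp}), convexity of $\|\cdot\|_2^2$ to push the average over workers inside, the worker-side compression bound, recognition of the $\|\cdot\|_{\mA}$ norm, and finally the tower property with the strong growth condition applied per worker. Even the subtlety you flag at the end (invoking Assumption~\ref{assm:bdd_var} for each $g_k^i$) is handled with the same implicit convention in the paper.
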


\begin{remark}
  If $g_k^i$ has bounded variance, $\Sigma$, say, then $\textstyle{{\rm Trace}(\mA\Sigma)=\sigma^2.}$
\end{remark}
Now we quote our first general inequality that the iterates of \eqref{iter:sgd} satisfy. This inequality does not directly yield convergence of the scheme in \eqref{iter:sgd}. However, this is a first necessary step to show convergence. We note that the matrix $\mA$ and $\sigma$ quantify layer-wise compression. 
\begin{proposition}
\label{theorem:recurrence}
With the notations and the framework defined before, the iterates of \eqref{iter:sgd} satisfy 
\begin{eqnarray}\label{eq:main-result}
\textstyle \eta_k\left(\mathbb{E}\|\nabla f_k\|^\alpha - \frac{\cL\eta_k}{2}\mathbb{E}\|\nabla f_k\|_{\bf A}^2 \right)&\leq  \mathbb{E}(f_k-f_{k+1}) \\&-\eta_k R_k\nonumber +\tfrac{\cL\eta_k^2\sigma^2}{2}.
\end{eqnarray}
\end{proposition}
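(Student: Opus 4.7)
The plan is to combine the $\cL$-smoothness descent inequality with the inner-product lower bound from Assumption \ref{assum:decrease-second} (validated for the common compressors by Lemma \ref{lemma:characterize_compression}) and the second-moment upper bound from Lemma \ref{lemma:secondorderterm}. Everything beyond these ingredients is arithmetic.

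First I would expand the update $x_{k+1}=x_k-\eta_k\tilde g_k$ inside the standard descent consequence of $\cL$-smoothness (Assumption \ref{assm:smoothness}):
$$f_{k+1}\le f_k-\eta_k\,\nabla f_k^\top\tilde g_k+\frac{\cL\eta_k^2}{2}\|\tilde g_k\|_2^2.$$
Taking total expectation yields $\mathbb{E} f_{k+1}\le\mathbb{E} f_k-\eta_k\,\mathbb{E}[\tilde g_k^\top\nabla f_k]+\tfrac{\cL\eta_k^2}{2}\mathbb{E}\|\tilde g_k\|_2^2$. Then I would lower-bound $\mathbb{E}[\tilde g_k^\top\nabla f_k]$ by $\mathbb{E}\|\nabla f_k\|^\alpha+R_k$ using Assumption \ref{assum:decrease-second}, and upper-bound $\mathbb{E}\|\tilde g_k\|_2^2$ by $\rho\,\mathbb{E}\|\nabla f_k\|_{\bf A}^2+\sigma^2$ using Lemma \ref{lemma:secondorderterm}. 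Substituting these in and collecting the $\nabla f_k$ terms on the left with the $f_k,f_{k+1},R_k,\sigma^2$ terms on the right reproduces the stated recurrence — with the understanding that $\rho$ is either absorbed into the matrix $\bf A$ (by rescaling each diagonal block) or is simply unity in the overparameterised regime discussed after Assumption \ref{assm:bdd_var}, which explains why $\rho$ does not appear explicitly in the statement.

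The main obstacle is not analytical but rather careful bookkeeping to ensure the three norms line up correctly. The descent lemma naturally produces an $\ell_2^2$ factor on $\|\tilde g_k\|$, which is exactly what Lemma \ref{lemma:secondorderterm} bounds, and the resulting bound is weighted by $\bf A=\mW_M\mW_W$ — so the $\|\cdot\|_{\bf A}^2$ appearing in the recurrence is inherited directly from the diagonal block structure of the two layer-wise compression matrices. Meanwhile the norm $\|\cdot\|^\alpha$ on the left of \eqref{eq:main-result} is whatever Assumption \ref{assum:decrease-second} supplies — $\ell_2^2$ for unbiased methods (case i of Lemma \ref{lemma:characterize_compression}), a weighted $\ell_2^2$ for layer-wise Random $k$ (case iii), and $\ell_1$ for signSGD (case iv) — and it propagates through the derivation without modification. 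No further concentration or stochastic machinery beyond $\cL$-smoothness and the two preceding lemmas is needed.
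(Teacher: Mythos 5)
Your proof follows exactly the paper's own route: the $\cL$-smoothness descent inequality applied to the update \eqref{iter:sgd}, followed by the lower bound of Assumption \ref{assum:decrease-second} on $\mathbb{E}[\tilde g_k^\top\nabla f_k]$ and the upper bound of Lemma \ref{lemma:secondorderterm} on $\mathbb{E}\|\tilde g_k\|_2^2$, then rearrangement. Your explicit handling of the factor $\rho$ is in fact more careful than the paper itself, whose proof silently drops $\rho$ between invoking Lemma \ref{lemma:secondorderterm} and the displayed conclusion (while the remark after the proposition and the later propositions reinstate it as $\tfrac{\cL\rho\eta_k}{2}$), so this discrepancy is a typo in the stated inequality rather than a gap in your argument.
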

\begin{remark}
If $\tilde{g}_k$ is an unbiased estimator of the gradient, then $\alpha =2$,  $\|\cdot\|= \|\cdot\|_2$, $\mA=\mI$, and $R_k = 0$. Therefore, (\ref{eq:main-result}) becomes
\begin{equation*}\label{eq:main-result_sgd}
\textstyle \eta_k\mathbb{E}\|\nabla f_k\|_2^2\left(1-\tfrac{\cL\rho\eta_k}{2}\right)\leq \mathbb{E}(f_k-f_{k+1}) +\frac{\cL\eta_k^2\sigma^2}{2}.
\end{equation*}
The above is the classic inequality used in analyzing SGD.
\end{remark}

In the non-convex setting, it is standard to show that over the iterations the quantity $\min_{k\in[K]}\mathbb{E}(\|\nabla f_k\|^2)$ approaches to 0 as $K\to\infty.$ Admittedly, this is a weak statement as it only guarantees that an algorithm converges to a local minimum of $f$. To facilitate this, next we quote two propositions: one is for the special case when $\alpha=2$; the other one covers all cases $\alpha\in(0,2]$. In the propositions, for simplicity, we use a fixed step-size $\eta$. One can easily derive the convergence of Algorithm \ref{alg_1} under general compression $Q$ to the $\epsilon$-optimum by choosing a sufficiently small or decreasing step-size, similarly to the classical analysis of SGD. 
\begin{proposition}\label{prop:speciallcase}(Special case.)
Consider $\alpha=2$, and $ R_k = 0$. Let $C>0$ be the constant due to the equivalence between the norms  $\|\cdot\|$ and $\|\cdot\|_{\mA}$, and $K>0$ be the number of iterations. If $\eta_k=\eta = \cO\left(\tfrac{1}{\sqrt{K}}\right)<\tfrac{2}{LC \rho}$ then 
\begin{eqnarray*}
 \textstyle\tfrac{\sum_{k=1}^K \mathbb{E}\|\nabla f_k\|^2}{K}\leq \cO\left(\tfrac{1}{\sqrt{K}}\right). 
\end{eqnarray*}
\end{proposition}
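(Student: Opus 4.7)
The plan is to specialize the one-step bound of Proposition~\ref{theorem:recurrence} to the hypotheses of the claim, convert it into a per-iteration descent estimate in the norm $\|\cdot\|$, and then sum and average, just as in the classical non-convex SGD analysis.

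First I would set $\alpha=2$, $R_k=0$, and $\eta_k=\eta$ in~(\ref{eq:main-result}), keeping in mind that the coefficient in front of $\mathbb{E}\|\nabla f_k\|_{\mA}^2$ inherits a factor $\rho$ from the second-moment bound of Lemma~\ref{lemma:secondorderterm} (the recurrence is obtained by combining $\cL$-smoothness, Assumption~\ref{assum:decrease-second} with $\alpha=2$, and that lemma). The resulting inequality reads
\begin{equation*}
\eta\, \mathbb{E}\|\nabla f_k\|^2 - \tfrac{\cL \rho \eta^2}{2}\, \mathbb{E}\|\nabla f_k\|_{\mA}^2 \;\le\; \mathbb{E}(f_k - f_{k+1}) + \tfrac{\cL \eta^2 \sigma^2}{2}.
\end{equation*}

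Next, since $\mA$ is SPD and we work in finite dimension, the norms $\|\cdot\|$ and $\|\cdot\|_{\mA}$ are equivalent, and the constant $C>0$ supplied in the statement furnishes $\mathbb{E}\|\nabla f_k\|_{\mA}^2 \le C\,\mathbb{E}\|\nabla f_k\|^2$. Substituting gives
\begin{equation*}
\eta\left(1 - \tfrac{\cL \rho C \eta}{2}\right)\mathbb{E}\|\nabla f_k\|^2 \;\le\; \mathbb{E}(f_k - f_{k+1}) + \tfrac{\cL \eta^2 \sigma^2}{2}.
\end{equation*}
The assumption $\eta < \tfrac{2}{\cL C\rho}$ ensures that $\delta \eqdef 1 - \tfrac{\cL \rho C \eta}{2}$ is strictly positive, so the left-hand side is a nonnegative multiple of $\mathbb{E}\|\nabla f_k\|^2$.

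I would then sum over $k=1,\dots,K$; the right-hand side telescopes and, invoking the lower bound $f_{K+1}\ge f^\star$, is upper-bounded by $(f_1-f^\star) + \tfrac{K\cL\eta^2\sigma^2}{2}$. Dividing through by $K\eta\delta$ yields
\begin{equation*}
\tfrac{1}{K}\sum_{k=1}^K \mathbb{E}\|\nabla f_k\|^2 \;\le\; \tfrac{f_1 - f^\star}{K\eta\delta} + \tfrac{\cL\eta\sigma^2}{2\delta}.
\end{equation*}
Plugging in $\eta=\Theta(1/\sqrt{K})$ makes $K\eta=\Theta(\sqrt{K})$, so the first term is $\cO(1/\sqrt{K})$ and the second term is $\cO(1/\sqrt{K})$ directly, giving the claim. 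The only mildly delicate point is that $\eta=\Theta(1/\sqrt{K})$ must respect $\eta<\tfrac{2}{\cL C\rho}$; this is automatic for $K$ larger than a fixed constant depending on $\cL$, $C$, $\rho$. The main bookkeeping step to get right is tracking the factor $\rho$ from Lemma~\ref{lemma:secondorderterm} through the recurrence, since it is what drives the step-size threshold.
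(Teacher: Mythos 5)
Your proposal is correct and follows essentially the same route as the paper's own proof: specialize the recurrence of Proposition~\ref{theorem:recurrence} with $\alpha=2$, $R_k=0$ (correctly reinstating the factor $\rho$ from Lemma~\ref{lemma:secondorderterm}), apply the norm-equivalence bound $\mathbb{E}\|\nabla f_k\|_{\mA}^2 \le C\,\mathbb{E}\|\nabla f_k\|^2$, telescope the sum, and use $\eta=\cO(1/\sqrt{K})$ together with the lower bound $f^\star$. The only differences are cosmetic (your $\delta$ is the paper's $a$), and your remark that the step-size constraint is compatible with $\eta=\Theta(1/\sqrt{K})$ for large $K$ is a point the paper leaves implicit.
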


\begin{proposition}\label{prop:generalcase}(General case.) {\it Assume $\|\nabla f_k\|\le G$. Let $C>0$ be the constant coming from the equivalence between  $\|\cdot\|$ and $\|\cdot\|_{\mA}$. Let $\eta_k=\eta=\cO\left(\frac{1}{\sqrt{K}}\right)<\frac{2}{\cL C\rho G^{2-\alpha}}$. Let $a \eqdef \left(1 -  \frac{\cL C \rho G^{2-\alpha} \eta}{2}\right) > 0$, then
  \begin{eqnarray*}
\textstyle \tfrac{\sum_{k=1}^K \mathbb{E} \|\nabla f_k\|^\alpha}{K} \leq \cO\left(\tfrac{1}{\sqrt{K}}\right) + \tfrac{\sum_{k=1}^K R_k}{a K}.
\end{eqnarray*}}
\end{proposition}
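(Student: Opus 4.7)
The plan is to reduce the general-case bound to the standard telescoping argument used in classical nonconvex SGD analyses by first collapsing the two different norms appearing on the left-hand side of Proposition~\ref{theorem:recurrence} onto a single quantity $\mathbb{E}\|\nabla f_k\|^\alpha$. Starting from that recurrence at the fixed step-size $\eta_k = \eta$,
$$\eta\!\left(\mathbb{E}\|\nabla f_k\|^\alpha - \tfrac{\cL\rho\eta}{2}\mathbb{E}\|\nabla f_k\|_{\mA}^2\right) \le \mathbb{E}(f_k - f_{k+1}) - \eta R_k + \tfrac{\cL \eta^2 \sigma^2}{2},$$
the first step I would carry out is to upper-bound the $\|\cdot\|_{\mA}^2$ term by a constant multiple of $\|\cdot\|^\alpha$.

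For this reduction I would chain two elementary facts. Because $\R^d$ is finite-dimensional, $\|\cdot\|_{\mA}$ and $\|\cdot\|$ are equivalent, so there is a constant $C>0$ (exactly the one stated in the proposition) such that $\|v\|_{\mA}^2 \le C\|v\|^2$ for every $v\in\R^d$; this is where the layer-wise compression matrix $\mA = \mW_M \mW_W$ gets absorbed into a single scalar. Since $\|\nabla f_k\|\le G$ and $0<\alpha\le 2$, we then split $\|\nabla f_k\|^2 = \|\nabla f_k\|^\alpha\,\|\nabla f_k\|^{2-\alpha}\le G^{2-\alpha}\,\|\nabla f_k\|^\alpha$, paying for the exponent gap with the uniform gradient bound. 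Composing gives $\mathbb{E}\|\nabla f_k\|_{\mA}^2\le C G^{2-\alpha}\,\mathbb{E}\|\nabla f_k\|^\alpha$, and substituting into the recurrence collapses its left-hand side to $\eta\bigl(1-\tfrac{\cL C \rho G^{2-\alpha}\eta}{2}\bigr)\mathbb{E}\|\nabla f_k\|^\alpha = \eta a\,\mathbb{E}\|\nabla f_k\|^\alpha$, where $a>0$ is guaranteed exactly by the step-size condition $\eta<\tfrac{2}{\cL C\rho G^{2-\alpha}}$.

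The second step is routine telescoping. I would sum $\eta a\,\mathbb{E}\|\nabla f_k\|^\alpha \le \mathbb{E}(f_k - f_{k+1}) - \eta R_k + \tfrac{\cL\eta^2\sigma^2}{2}$ for $k=1,\dots,K$, use Assumption~1 ($f_{K+1}\ge f^\star$) to bound the telescoped differences by $f_1 - f^\star$, divide by $\eta a K$, and arrive at
$$\tfrac{1}{K}\sum_{k=1}^K \mathbb{E}\|\nabla f_k\|^\alpha \le \tfrac{f_1-f^\star}{\eta a K} + \tfrac{\cL\eta\sigma^2}{2a} + \tfrac{\sum_{k=1}^K R_k}{a K},$$
where the sign in front of the residual sum is fixed by the convention chosen in Assumption~\ref{assum:decrease-second}. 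Plugging in $\eta = \Theta(1/\sqrt{K})$ renders the first summand $\cO(1/\sqrt{K})$ (the $\sqrt{K}$ in the numerator cancels one factor of $K$ in the denominator) and the second summand $\cO(1/\sqrt{K})$ directly, yielding the stated bound.

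The only nontrivial step is the norm-conversion above; once one realises that $G^{2-\alpha}$ exactly compensates the jump from exponent $\alpha$ to exponent $2$ and that the constant $C$ exactly swallows the layer-wise compression matrix, the rest of the argument is a direct transcription of the classical nonconvex SGD telescoping, with $\|\nabla f_k\|_2^2$ replaced by $\|\nabla f_k\|^\alpha$ and an extra additive residual $R_k$ carried through the sum.
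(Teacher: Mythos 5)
Your proof is correct and follows essentially the same route as the paper's: the norm equivalence $\|v\|_{\mA}^2\le C\|v\|^2$ combined with $\|\nabla f_k\|^2\le G^{2-\alpha}\|\nabla f_k\|^{\alpha}$ to collapse the recurrence of Proposition~\ref{theorem:recurrence} into $\eta a\,\mathbb{E}\|\nabla f_k\|^{\alpha}$, followed by telescoping, the lower bound $f^\star$, and $\eta=\cO(1/\sqrt{K})$. In fact you are slightly more explicit than the paper in justifying the exponent conversion via $\|\nabla f_k\|^{\alpha}\|\nabla f_k\|^{2-\alpha}$, and you correctly flag the same sign-of-$R_k$ convention that the paper's own derivation glosses over when passing from $-\eta_k R_k$ in the recurrence to $+\sum_k R_k/(aK)$ in the final bound.
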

\begin{remark}
 Note that if we assume small residuals such that 
 for all $k$, $\textstyle{R_k = \cO\left(\tfrac{1}{\sqrt{K}}\right)}$, then we have
$\textstyle{\tfrac{\sum_{k=1}^K \mathbb{E} \|\nabla f_k\|^\alpha}{K} \leq \cO\left(\tfrac{1}{\sqrt{K}}\right)}$.
\end{remark}

\begin{remark}
  From the above propositions, immediately one can observe, $\textstyle{K\min_{k\in[K]}\mathbb{E}(\|\nabla f_k\|^\alpha)\le \sum_{k=1}^k\mathbb{E}(\|\nabla f_k\|^\alpha)}$ and hence the above propositions directly imply convergence of the iterative scheme in \eqref{iter:sgd} under layer-wise compression. 
\end{remark}

\begin{remark}
  Note that for all the cases we mentioned in Lemma 2, except for signSGD, we have $\alpha=2$, and $ R_k = 0$, so we are in Proposition \ref{prop:speciallcase}.
  For signSGD, $|R_k|\le \cO(1/BS)$ so if one  uses $BS = \cO(\sqrt{K})$, then we get  $
\textstyle{\tfrac{\sum_{k=1}^K \mathbb{E} \|\nabla f_k\|^{\alpha}}{K} \leq \cO\left(\tfrac{1}{\sqrt{K}}\right).}$
\end{remark}
We note that our convergence analysis can be extended to convex and strongly convex cases. 

\vspace{-8pt}
\paragraph{Layer-wise compression vs entire-model compression.} From our analysis, a natural question arises: can we relate the noise for layer-wise compression with that of entire-model compression? The answer is yes and we give a sharper estimate of the noise bound. From our convergence results, we see that the error for layer-wise compression is proportional to ${\rm Trace}({\bf A})=\sum_{j=1}^L(1+\Omega_M^j)(1+\Omega_W^j)$. This is less than or equal to $L\max_{j}(1+\Omega_M^j)(1+\Omega_W^j)$, which is the error for using bidirectional compression applied to the entire model.

\section{Empirical Study}\label{sec:experiments}

We implement several well-known compression methods and show experimental results contrasting layer-wise with entire-model compression for a range of standard benchmarks.

\subsection{Implementation highlights}
We base our proof-of-concept implementation on PyTorch.\footnote{Available at \url{https://github.com/sands-lab/layer-wise-aaai20}.}
Layer-wise and entire-model compression share the same compression operations. The difference is the inputs used with each invocation of the compressor.
As with other modern toolkits, in PyTorch, the gradients are computed layer-by-layer during the backward pass, starting from the last layer of the model. As such, layer $j$'s gradient is available as soon as it is computed and before backward propagation for layer $j-1$ starts. Distributed training typically exploits this characteristic to accelerate training by overlapping some amount of communication with the still ongoing gradient computation.
In the layer-wise setting, our implementation invokes the compressor independently for the gradient of each layer as soon as it is available.
In contrast, in the entire-model setting, our implementation waits until the end of the backward pass and invokes the compressor once with the entire model gradients as input. Clearly, this introduces an additional delay before communication starts; however, with smaller volumes of transmitted data, the benefits of compressed communication can eclipse this performance penalty.

\subsection{Experimental setting} 

\smartparagraph{Compression methods.}
We experiment with the following compression operators: Random $k$, Top $k$, Threshold $v$, TernGrad~\cite{terngrad}, Adaptive Threshold~\cite{adacomp}, and QSGD~\cite{alistarh2017qsgd}. Given an input vector, Random $k$ uniformly samples $k\%$ of its elements; Top $k$ selects the largest $k\%$ elements by magnitude; Threshold $v$ selects any element greater or equal to $v$ in magnitude.

\smartparagraph{Benchmarks.}
We adopt DAWNBench~\cite{DAWNBench} as a benchmark for image classification tasks using convolutional neural networks (CNNs).
We train AlexNet \cite{alexnet}, ResNet-9 and ResNet-50 \cite{resnet152} models.
We use standard CIFAR-10~\cite{cifar10} and ImageNet~\cite{imagenet} datasets.

\smartparagraph{Hyperparameters.} We set the global mini-batch size to 1,024 (the local mini-batch size is equally divided across workers); the learning rate schedule follows a piecewise-linear function that increases the learning rate from 0.0 to 0.4 during the first 5 epochs and then decreases to 0.0 till the last epoch. Unless otherwise noted, CIFAR-10 experiments run for 24 epochs and ImageNet experiments for 34 epochs.
Where applicable, we use ratio $k$ in \{0.1, 1, 10, 30, 50\}\%.

\smartparagraph{Environment.} We perform our experiments on server-grade machines running Ubuntu 18.04, Linux 4.15.0-54, CUDA 10.1 and PyTorch 1.2.0a0\_de5a481. The machines are equipped with 16-core 2.6 GHz Intel Xeon Silver 4112 CPU, 512 GB of RAM and 10 Gbps network interface cards. Each machine has an NVIDIA Tesla V100 GPU with 16 GB of memory. We use two machines for CIFAR-10 experiments while we use four machines for ImageNet experiments.

\smartparagraph{Evaluation metrics.} We report the accuracy on a held-out testing dataset evaluated at the end of each epoch during the training process. We compare the test accuracy of layer-wise and entire-model compression.

\begin{figure}[t]
    \centering
    \begin{minipage}{0.48\textwidth}
      \includegraphics[width=1\textwidth]{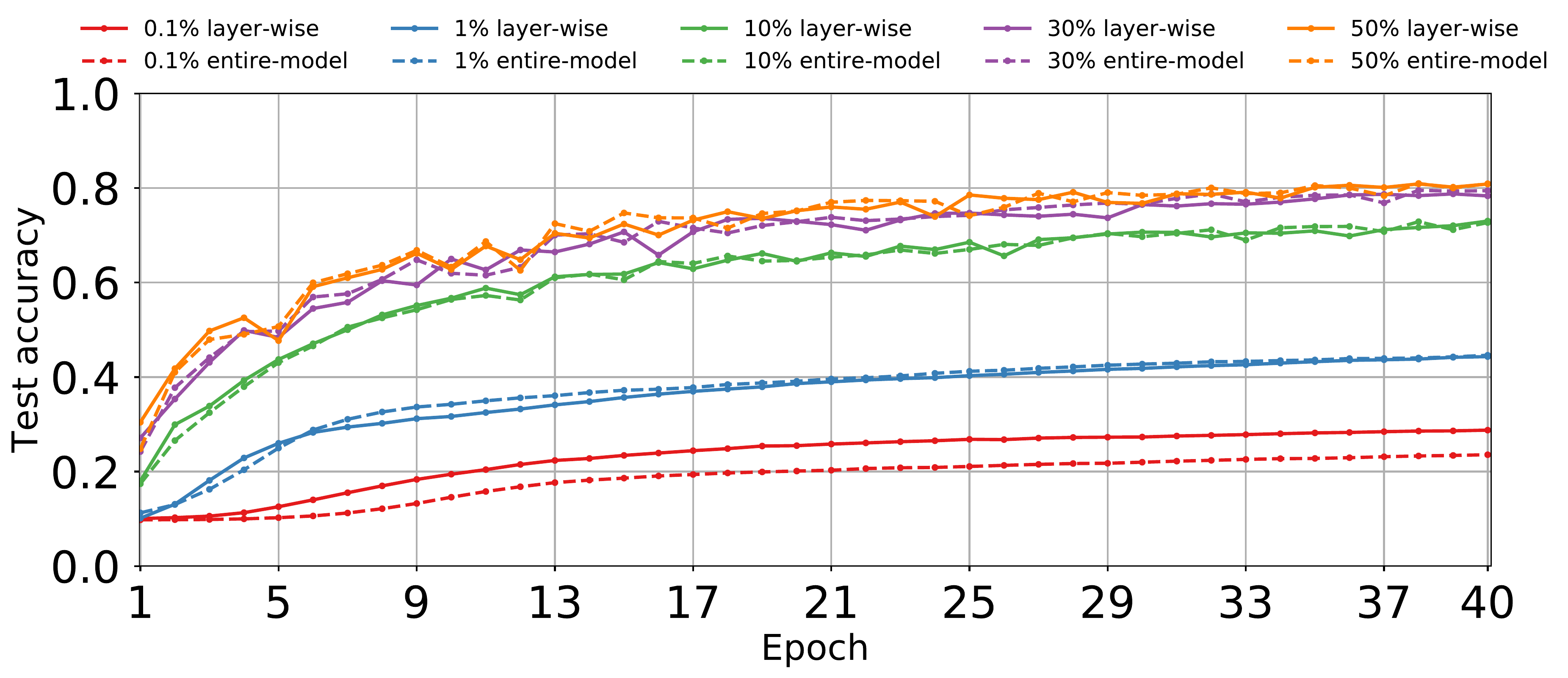}
       \subcaption{\small{AlexNet}}\label{fig:random_alexnet_cifar10}   
     \end{minipage}
     \begin{minipage}{0.48\textwidth}
      \includegraphics[width=1\textwidth]{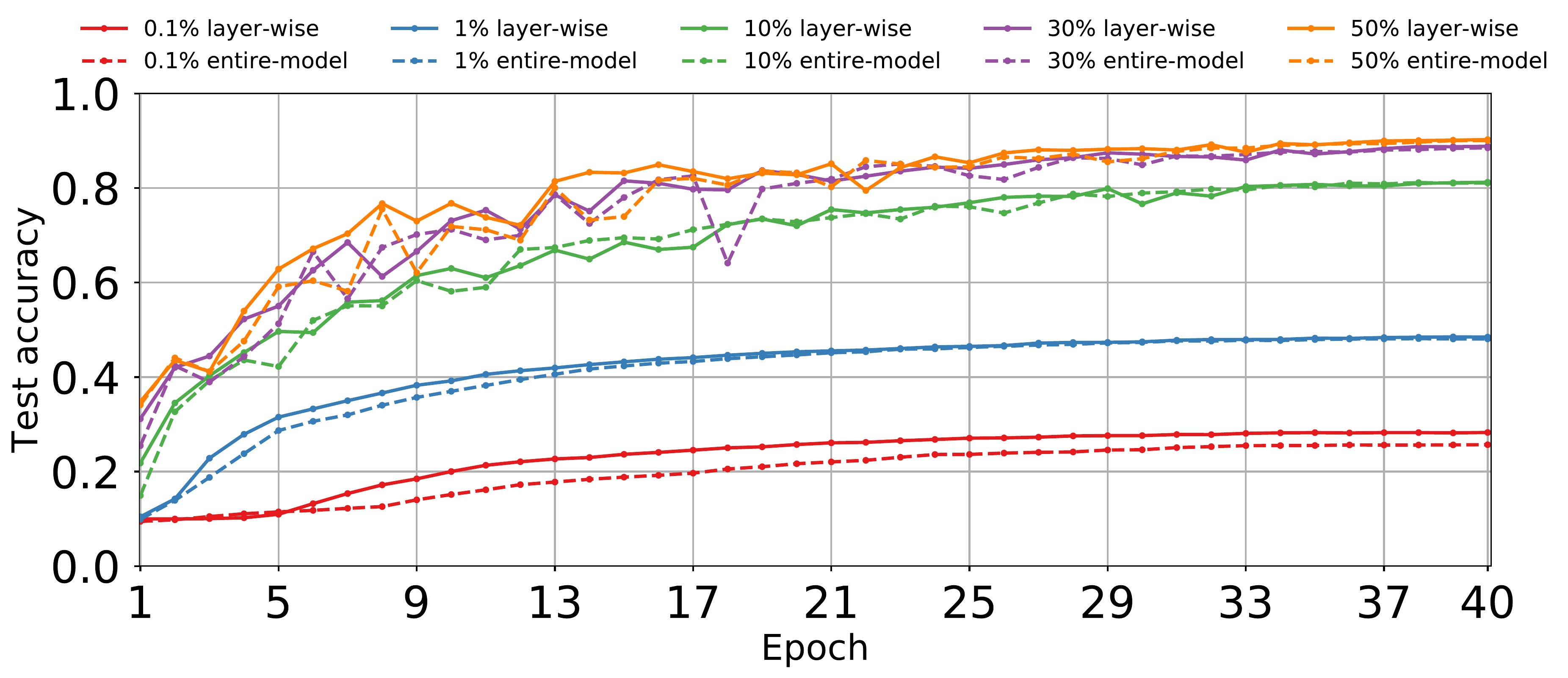}
     \subcaption{\small{ResNet-9}}\label{fig:random_resnet9_cifar10}
     \end{minipage}
     \caption{\small{CIFAR-10 test accuracy for Random $k$ compression.}}
    \label{fig:random_cifar10}
\end{figure}

\subsection{Experimental Results}

Below we illustrate the results for each compression method.
In a nutshell, our results show that both layer-wise and entire-model compression approaches achieve in most cases similar convergence behavior and test accuracy. However, certain compression methods, namely, TernGrad, QSGD, and Adaptive Threshold achieve significantly better accuracy using layer-wise compression. This is because per-layer compression in these cases capitalizes on more fine-grained representations that reduce the overall compression error.

\smartparagraph{Random $k$.}
Figure~\ref{fig:random_cifar10} reports results for Random $k$ compression while training AlexNet and ResNet-9 on CIFAR-10. We observe that layer-wise Random $k$ achieves comparable results to entire-model compression at different sparsification ratios, except for ratio of $0.1\%$ where layer-wise supersedes entire-model compression. This is not surprising because both layer-wise and entire-model compression approaches sample uniformly at random gradient elements, and so, every element has an equal probability of being sampled regardless of its magnitude. We also notice that for ratios less than $10\%$, Random $k$ has a slower rate of convergence for both compression approaches compared to other compression methods (shown below). This suggests that randomly selecting a sample of gradient elements with no regard to their importance is not ideal, especially for small sparsification ratios.

\begin{figure}[t]
    \centering
    \begin{minipage}{0.48\textwidth}
      \includegraphics[width=1\textwidth]{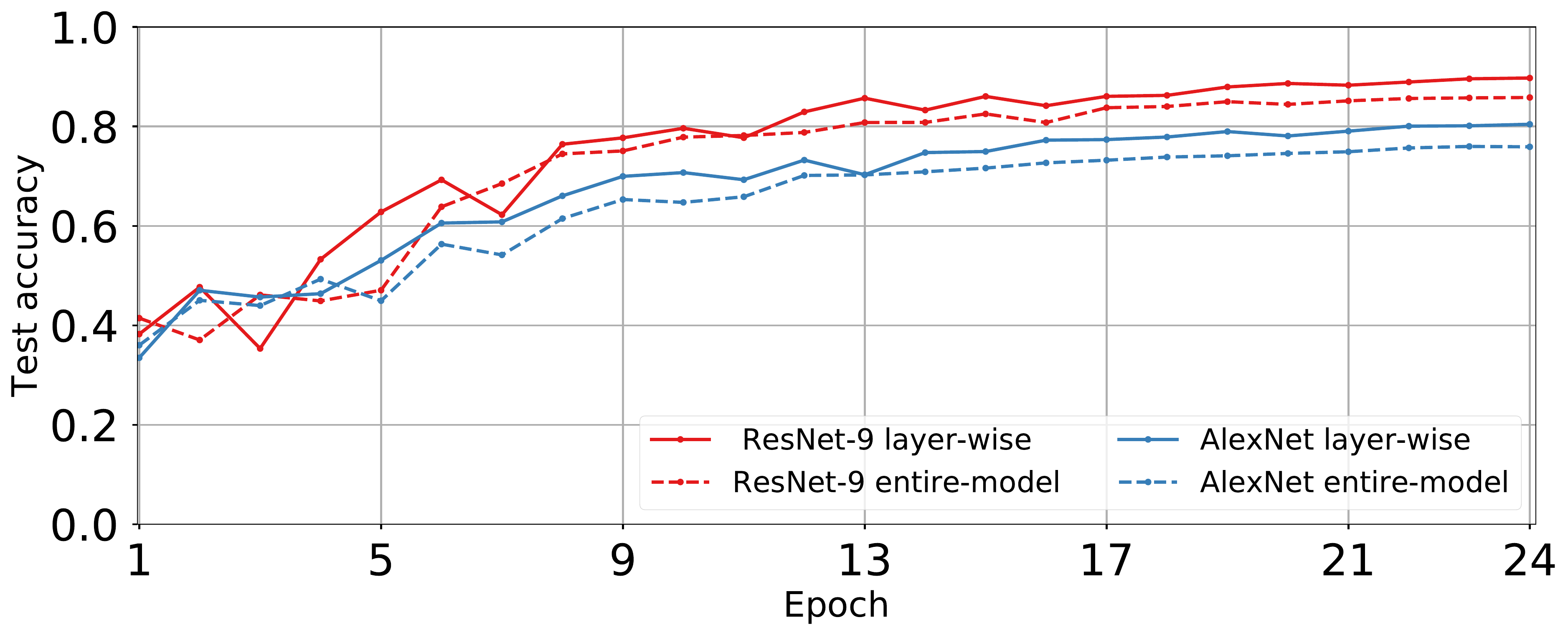}
       \subcaption{\small{CIFAR-10 test accuracy}}\label{fig:terngrad_cifar10}   
     \end{minipage}
     \begin{minipage}{0.48\textwidth}
      \includegraphics[width=1\textwidth]{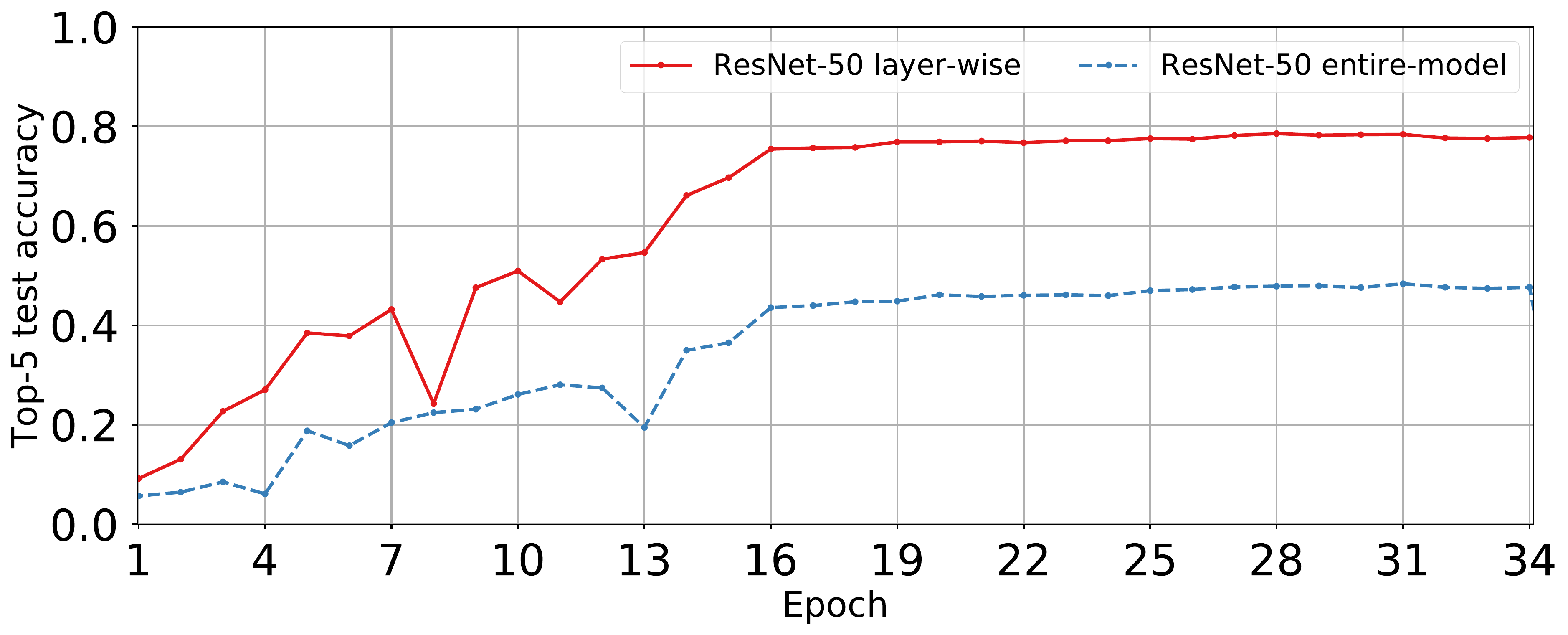}
     \subcaption{\small{ImageNet top-5 test accuracy}}\label{fig:terngrad_imagenet}
     \end{minipage}
     \caption{\small{TernGrad compression.}}
    \label{fig:terngrad}
\end{figure}

\begin{figure}[t!]
    \centering
    \begin{minipage}{0.48\textwidth}
      \includegraphics[width=1\textwidth]{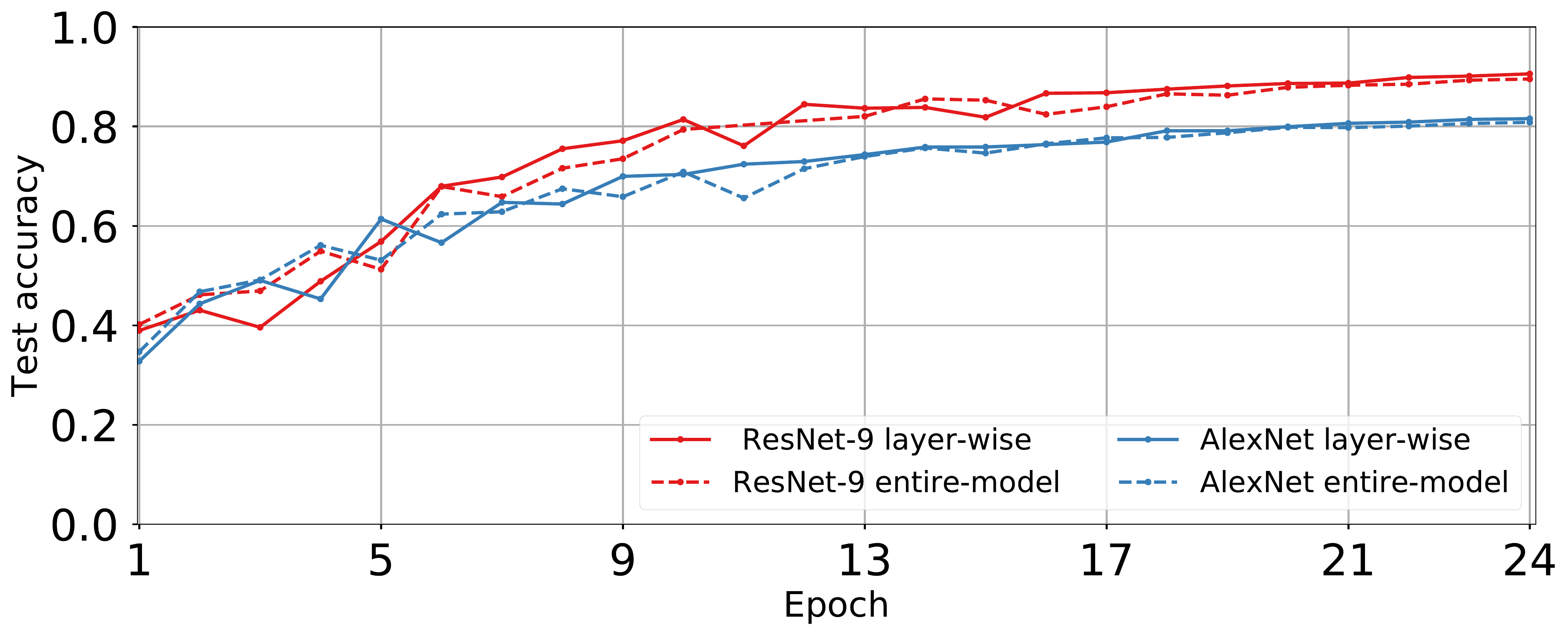}
       \subcaption{\small{CIFAR-10 test accuracy}}\label{fig:qsgd_cifar10}   
     \end{minipage}
     \begin{minipage}{0.48\textwidth}
      \includegraphics[width=1\textwidth]{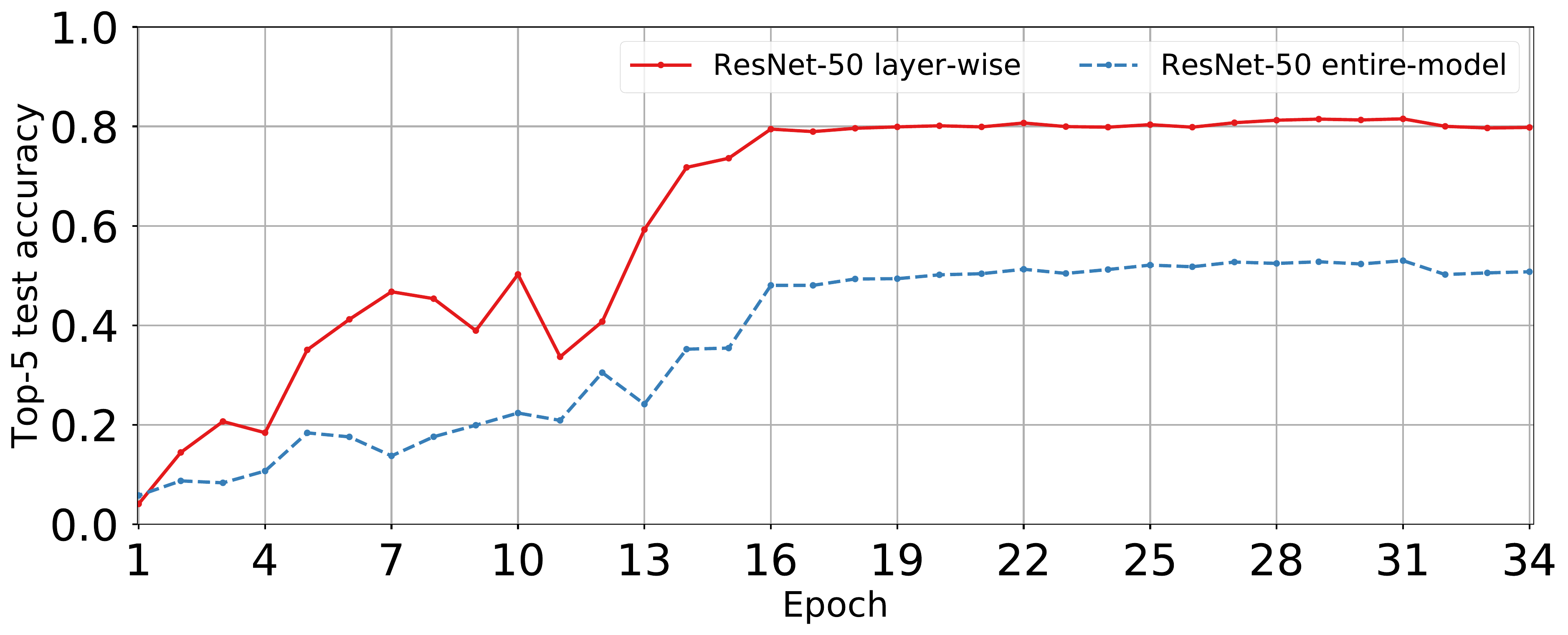}
     \subcaption{\small{ImageNet top-5 test accuracy}}\label{fig:qsgd_imagenet}
     \end{minipage}
    \caption{\small{QSGD compression (with $s=256$).}}
    \label{fig:qsgd}
\end{figure}

\smartparagraph{TernGrad.}
Figure~\ref{fig:terngrad} presents the results of TernGrad compression for several benchmarks. We observe that with TernGrad, layer-wise compression achieves consistently higher test accuracy compared to entire-model compression.
Mostly this result is attributed to the following.
As an unbiased quantization method, TernGrad scales the gradient values (i.e., three numerical levels $\{-1,0,1\}$) by a scalar computed as a function of the gradient vector and its size.
For entire-model compression, there is a single scalar and this may be looser than each layer's scalar used in layer-wise compression.
Thus, when the model is updated (line 8 of Algorithm~\ref{alg_1}), entire-model compression has higher probability of error.

\smartparagraph{QSGD.}
The results using QSGD, shown in Figure~\ref{fig:qsgd} are similar to TernGrad.
We note that ImageNet layer-wise accuracy is 1.52$\times$ better than entire-model compression.

\begin{figure}[t]
    \centering
      \includegraphics[width=0.48\textwidth]{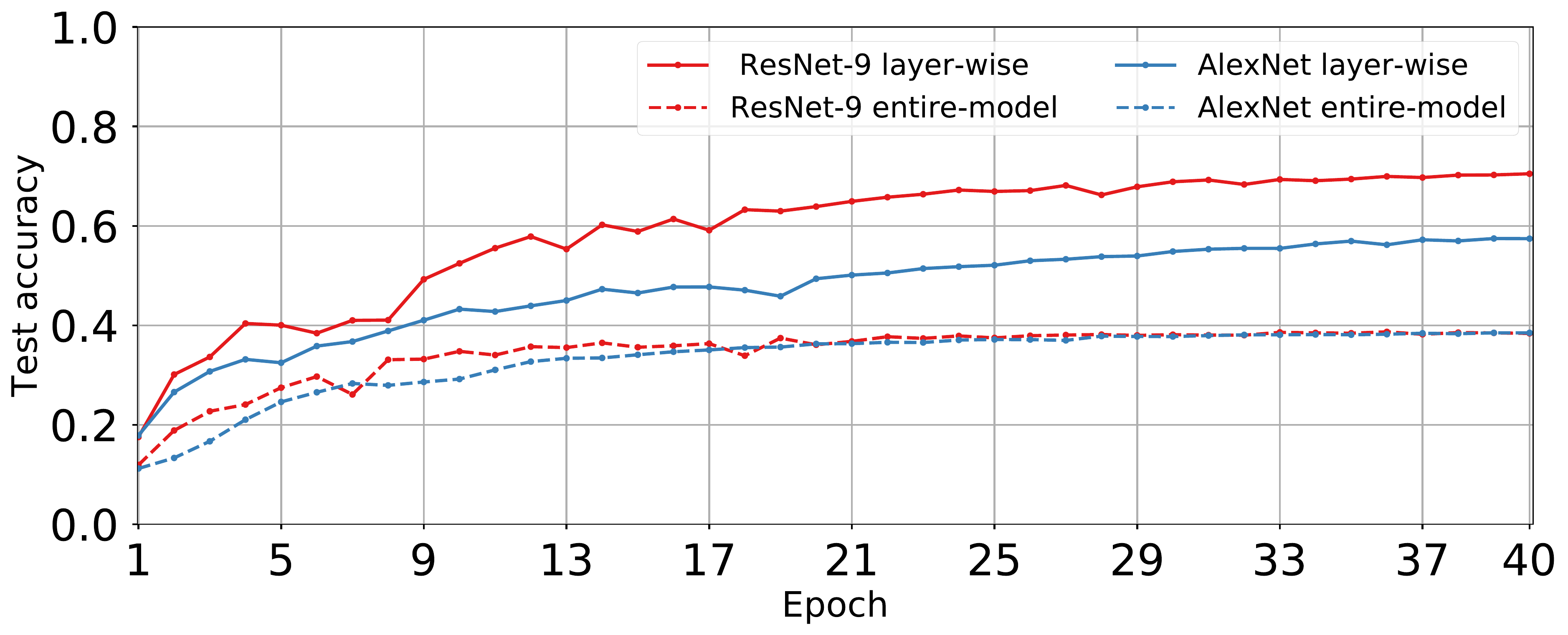}
    \caption{\small{CIFAR-10 test accuracy for Adaptive Threshold.}}
    \label{fig:AdaThresh}
\end{figure}

\begin{figure}[t]
    \centering
      \includegraphics[width=0.48\textwidth]{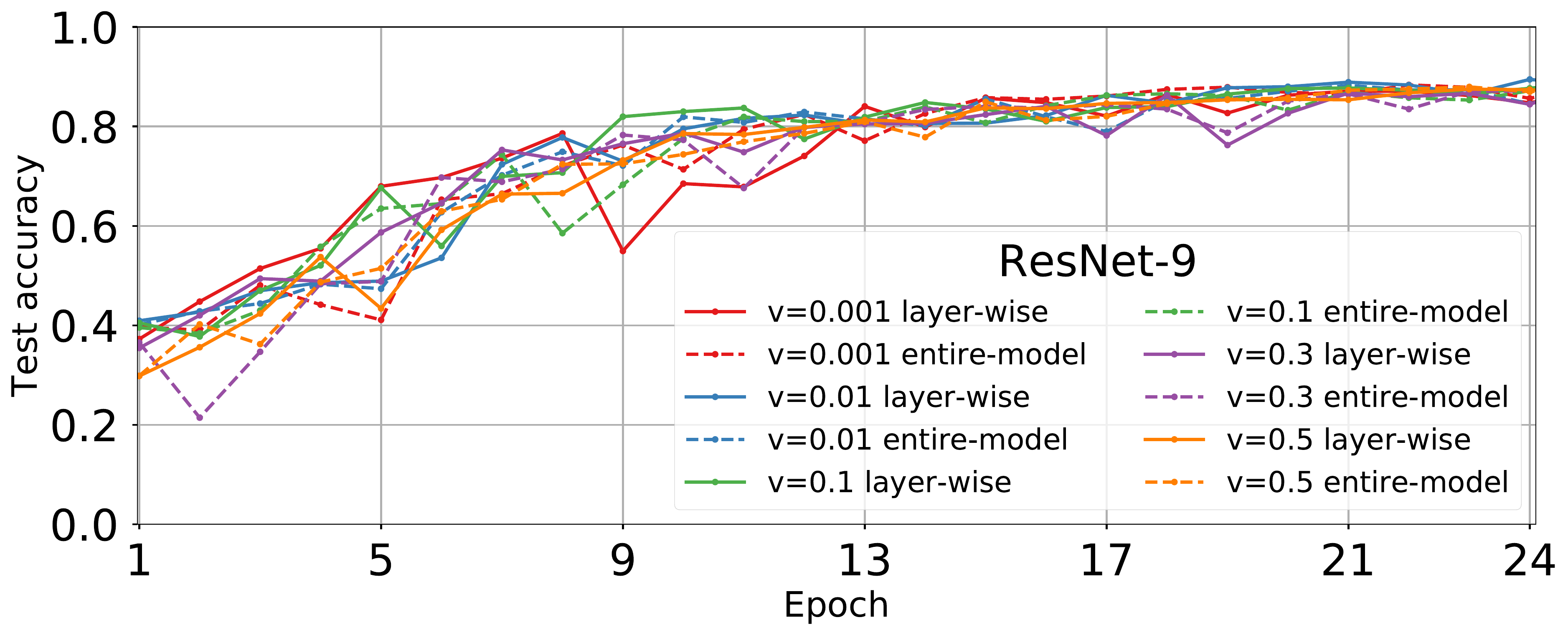}
    \caption{\small{CIFAR-10 test accuracy for Threshold $v$ compression.}}
    \label{fig:thresh_cifar10}
\end{figure}

\begin{figure}[th]
    \centering
    \begin{minipage}{0.48\textwidth}
      \includegraphics[width=1\textwidth]{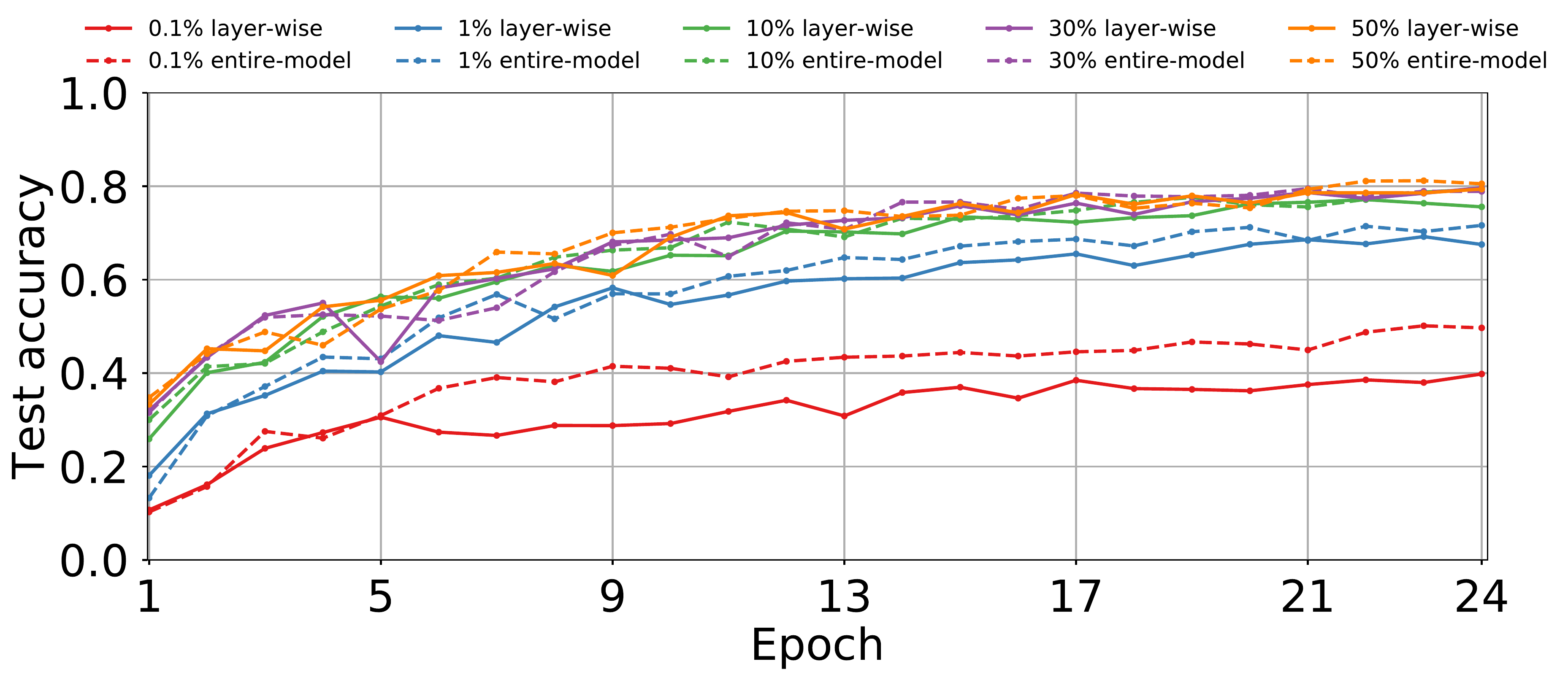}
       \subcaption{\small{AlexNet}}\label{fig:top_alexnet_cifar10}   
     \end{minipage}
     \begin{minipage}{0.48\textwidth}
      \includegraphics[width=1\textwidth]{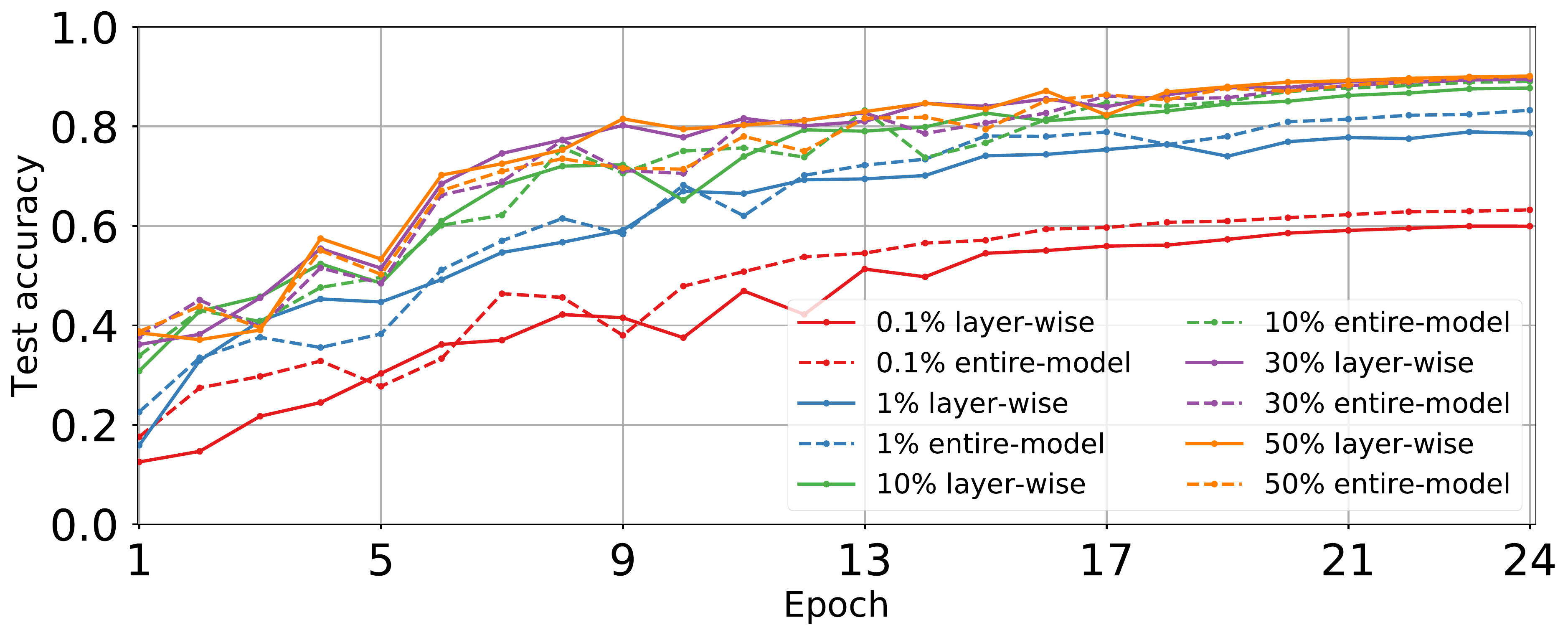}
     \subcaption{\small{ResNet-9}}\label{fig:top_resnet9_cifar10}
     \end{minipage}
     \begin{minipage}{0.48\textwidth}
      \includegraphics[width=1\textwidth]{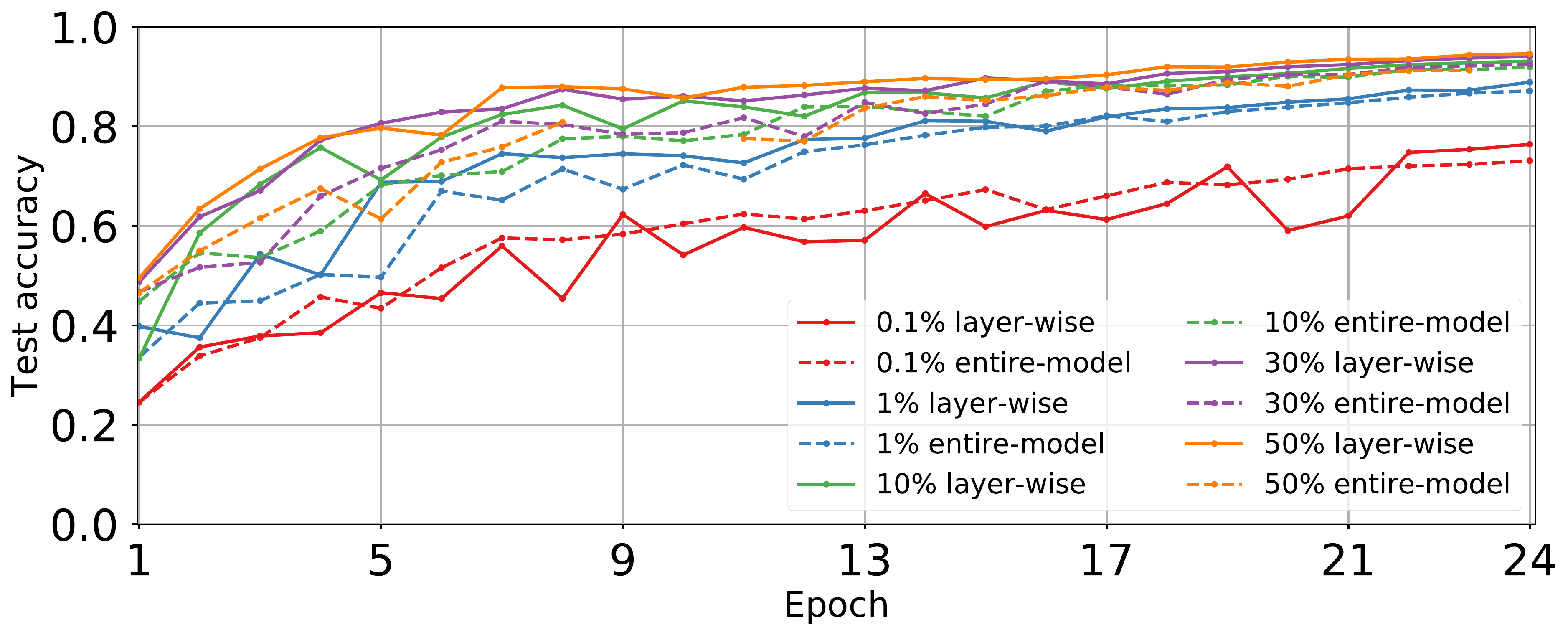}
     \subcaption{\small{ResNet-9 by using SGD with Nesterov's momentum}}\label{fig:top_cifar10_mom}
     \end{minipage}
    \caption{\small{CIFAR-10 test accuracy for Top $k$ compression.}}
    \label{fig:top_cifar10}
\end{figure}

\begin{figure}[t!]
    \centering
      \includegraphics[width=0.48\textwidth]{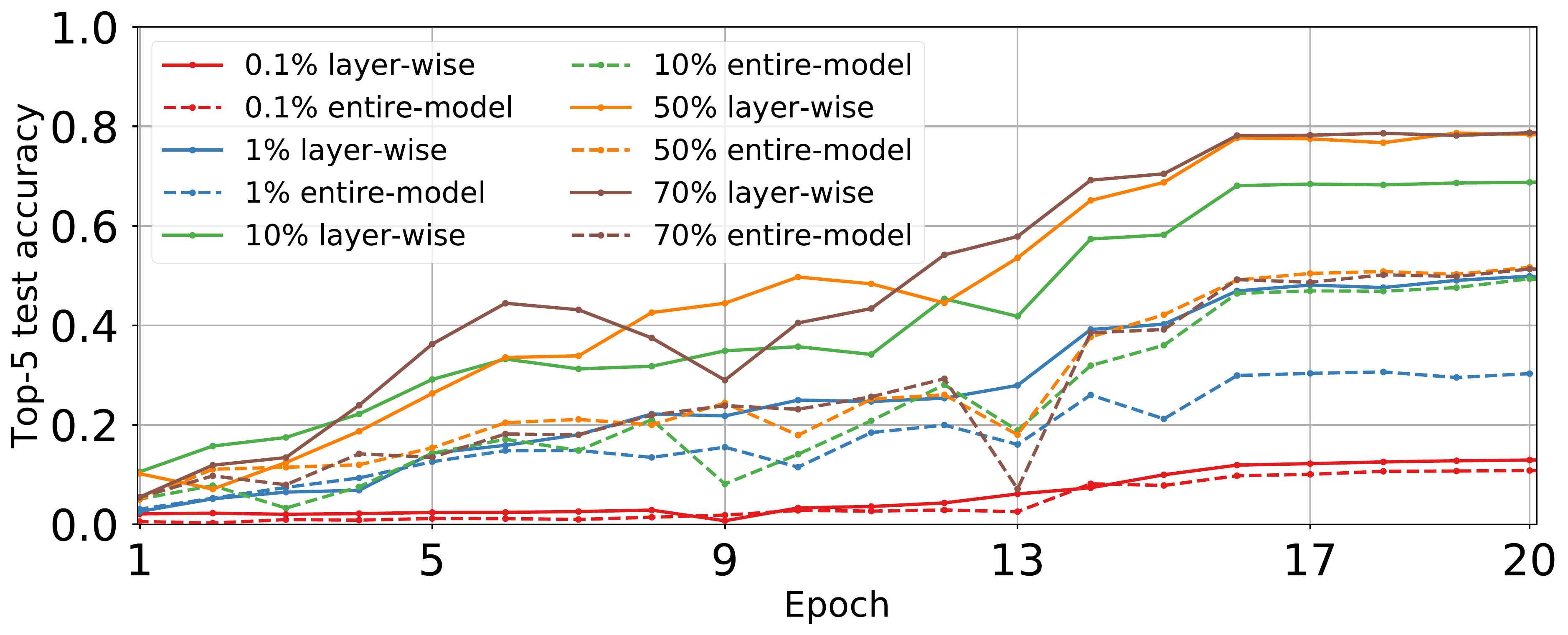}
    \caption{\small{ImageNet top-5 test accuracy for Top $k$ compression.}}
    \label{fig:top_imagenet}
\end{figure}

\smartparagraph{Adaptive Threshold.}
Figure~\ref{fig:AdaThresh} shows the results of Adaptive Threshold compression while training AlexNet and ResNet-9 on CIFAR-10. As before, layer-wise compression achieves better accuracy compared to entire-model compression. The reasoning for this is similar to TernGrad: here, a per-layer threshold chosen with respect to the layer's gradient values performs better than a single threshold selected for the entire-model gradient values. However, we note that this compression method, compared to others, induces slower convergence and only achieves at best $\approx70\%$ accuracy after 40 epochs.

\smartparagraph{Threshold $v$.}
Figures~\ref{fig:thresh_cifar10} reports results of Threshold $v$ compression while training ResNet-9 on CIFAR-10 (AlexNet is qualitatively similar and omitted). We observe that layer-wise and entire-model compression achieve similar accuracy for a wide range of thresholds across three orders of magnitude. This is expected because every gradient element greater than $v$ in magnitude is transmitted in both approaches.

\smartparagraph{Top $k$.}
Figure~\ref{fig:top_cifar10} presents the results on CIFAR-10 for Top $k$ compression.
Similarly to Random $k$, layer-wise compression achieves comparable results to entire-model compression for a range of sparsification ratios. 

Interestingly, for ratios $\leq 10\%$, where test accuracy is overall low, entire-model compression performs better than layer-wise compression.
We could attribute this to the relevance of different layers to the model's convergence, in accordance to a recent study on the relevance of layers in DNNs~\cite{Montavon2018}. Unlike layer-wise compression, the top $k\%$ gradient elements picked by entire-model compression could indeed be more important towards the optimization objective, that is, the loss function. Small sparsification ratios and models with a relatively small number of layers stress this behavior.
However, to highlight that layer-wise compression is not necessarily inferior in these settings, we repeat the experiment training ResNet-9 by using SGD with Nesterov's momentum (which is outside the scope of our theoretical analysis).
Figure~\ref{fig:top_cifar10_mom} shows that layer-wise compression is comparable to, if not better than entire-model compression even at small ratios.
We leave a thorough analysis of these observations to future work.

Figure~\ref{fig:top_imagenet} shows the top-5 test accuracy for the ResNet-50 model trained on ImageNet (only 20 epochs shown for clarity). Layer-wise compression achieves 1.25-1.63$\times$ better test accuracy. In contrast to CIFAR-10 results, layer-wise compression supersedes entire-model compression even at small ratios (i.e., $0.1\%$). This reaffirms that layer-wise compression remains more effective for models with a larger number of layers compared to previous experiments.

\subsection{Training Time}
We remark that our focus in this study is the convergence behavior of existing methods. Although total training time is an important factor, we do not present wall-clock results because (1) our unoptimized implementation does not represent a valid proof point for the level of training speedup that well-engineered compression methods can offer, and (2) the literature has already established that there are potentially large speedups to be achieved.
Indeed, our measurements show that performance depends on several factors including the model size and depth, the computational costs of compression and layer sizes. While layer-wise compression yields opportunities for overlapping communication and computation, we note that in some cases, overheads are better amortized by combining multiple invocations of the communication routines into a single one.
We leave it to future work to thoroughly study the performance implications of layer-wise vs. entire-model compression.

\section{Conclusion}
We identified a significant discrepancy between the theoretical analysis of the existing gradient compression methods and their practical implementation: while in practice compression is applied layer-wise, theoretical analysis is presumes compression is applied on the entire model.
We addressed the lack of understanding of converge guarantees in the layer-wise setting by proposing a bi-directional compression framework that encompasses both biased compression methods and unbiased ones as a special case. We proved tighter bounds on the noise (i.e., convergence) induced on SGD optimizers by layer-wise compression and showed that it is theoretically no worse than entire model compression. We implemented many common compression methods and evaluated their accuracy comparing layer-wise compression to entire-model compression. Conforming to our analysis, our results illustrated that in most cases, layer-wise compression performs no worse than entire-model compression, and in many cases it achieves better accuracy, in particular for larger models.

\bibliographystyle{aaai}
{\small
\bibliography{references}
}

\newpage
\onecolumn
\appendix 

\section{Convergence Analysis: Proofs}
\label{app:proofs}
In this section, we prove the claims in Section \ref{sec:conv}.

\begin{lemma*}
{\it Let $Q(\cdot):\R^d\to\R^d$ be layer-wise biased compression operator with $Q\eqdef (Q_1\;Q_2\cdots Q_L)$, such that, each $Q_j(\cdot):\R^{d_j}\to\R^{d_j}$ for $j=1,2,\cdots, L$ satisfies Assumption \ref{assum:compression} with $\Omega=\Omega_j$. Then we have 
\begin{eqnarray}\label{eq:q_sum2}
 \textstyle\mathbb{E}_Q\left(\|Q(x)\|_2^2\right)\le\sum_{1\le j\le L}(1+\Omega_j)\|x^j\|_2^2\le\max_{1\le j\le L}(1+\Omega_j)\|x\|_2^2.
\end{eqnarray}}
\end{lemma*}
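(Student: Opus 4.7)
The plan is to exploit the block structure of $Q$ directly. Since $Q = (Q_1, Q_2, \ldots, Q_L)$ acts layer-wise, for any $x \in \R^d$ decomposed as $x = (x^1, x^2, \ldots, x^L)$ with $x^j \in \R^{d_j}$, the Euclidean norm splits cleanly: $\|Q(x)\|_2^2 = \sum_{j=1}^L \|Q_j(x^j)\|_2^2$. This is the key structural identity that reduces the problem to $L$ independent applications of Assumption~\ref{assum:compression}.

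First I would write out this decomposition and then take expectation over the internal randomness of $Q$, which (since each $Q_j$ is applied to a distinct block $x^j$) can be distributed across the sum by linearity of expectation. Then I would apply Assumption~\ref{assum:compression} with $\Omega = \Omega_j$ to each layer separately, giving $\mathbb{E}_{Q_j}\|Q_j(x^j)\|_2^2 \le (1+\Omega_j)\|x^j\|_2^2$. Summing these inequalities over $j = 1, \ldots, L$ yields the first bound in \eqref{eq:q_sum2}.

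For the second inequality, I would bound $(1+\Omega_j)$ by $\max_{1\le j\le L}(1+\Omega_j)$ uniformly, pull this constant out of the sum, and then use the fact that $\sum_{j=1}^L \|x^j\|_2^2 = \|x\|_2^2$ since the layer-wise blocks partition the coordinates of $x$. This delivers the final bound.

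Honestly, there is no real obstacle here — the lemma is essentially a bookkeeping statement that records how the per-layer contraction constants $\Omega_j$ aggregate into a single bound for the composite operator. The only minor subtlety worth flagging in the writeup is justifying the interchange of sum and expectation by linearity (which holds trivially) and noting that this block decomposition is precisely what allows the sharper per-layer sum to dominate the coarser maximum, a point that the paper later leverages when comparing $\mathrm{Trace}(\mA)$ to $L \max_j (1+\Omega_M^j)(1+\Omega_W^j)$ in the layer-wise vs.\ entire-model discussion.
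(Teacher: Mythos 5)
Your proof is correct and follows essentially the same route as the paper's: exploit the block structure to split $\|Q(x)\|_2^2$ into per-layer terms, apply Assumption~\ref{assum:compression} to each block, and then bound the weighted sum by the maximum of the $(1+\Omega_j)$ factors. The only difference is that you spell out the decomposition $\|Q(x)\|_2^2=\sum_{j}\|Q_j(x^j)\|_2^2$ and the linearity-of-expectation step explicitly, which the paper compresses into a single line.
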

\begin{proof}
Exploiting the layer-wise structure of $Q=(Q_1\;Q_2\cdots Q_L)$ we can write 
$$\mathbb{E}_Q\left(\|Q(x)\|_2^2\right)=\mathbb{E}_Q\left(\|(Q_1\;Q_2\cdots Q_L)(x)\|_2^2\right)\overset{{\rm By}\;\eqref{eq:kcontrac}}\le\sum_{1\le j\le L}(1+\Omega_j)\|x^j\|_2^2\le\max_{1\le j\le L}(1+\Omega_j)\|x\|_2^2.$$
Hence the result.
\end{proof}

\begin{lemma*}
{\it We note the following:
\begin{enumerate}
\item (For unbiased compression) If $\tilde{g}_k$ is unbiased (the case when $Q_M$ and $Q_W$ are unbiased), then
\begin{equation}
\mathbb{E} \left[\tilde{g}_k^\top \nabla f_k\right] = \mathbb{E} \|\nabla f_k\|_2^2. 
\end{equation}
Therefore, $\tilde{g}_k$ satisfies  Assumption~\ref{assum:decrease-second} with $\alpha=2$, $\|\cdot\|= \|\cdot\|_2$ and $ R_k = 0$.

\item  If $Q_M$ and $Q_W$ are both Random $k$ compression with compression factors $k_M$ and $k_W$, respectively, then
\begin{equation}
\mathbb{E} \left[\tilde{g}_k^\top \nabla f_k\right] = \frac{k_M k_W}{d^2} \mathbb{E} \|\nabla f_k\|_2^2.
\end{equation}
Therefore, $\tilde{g}_k$ satisfies  Assumption~\ref{assum:decrease-second} with $\alpha=2$, $\|\cdot\|=\frac{k_M k_W}{d^2} \|\cdot\|_2$, and $ R_k = 0$.

\item Let $Q_M$ be the layer-wise compression operator that is Random ${k_{M_j}}$ for each layer $j$. Similarly, $Q_W$ is also a layer-wise compression operator that is the same for all workers and for each layer $j$, it is Random ${k_{W_j}}$, then
\begin{equation}
\mathbb{E} \left[\tilde{g}_k^\top \nabla f_k\right] = \mathbb{E} \|\nabla f_k\|_{\mB}^2, 
\end{equation}
where $\textstyle{\mB= {\rm diag}\left(\frac{k_{M_1} k_{W_1}}{d_1^2}\mI_1\;\frac{k_{M_2} k_{W_2}}{d_2^2}\mI_2\cdots \frac{k_{M_L} k_{W_L}}{d_L^2}\mI_L \right)}$.
Therefore, $\tilde{g}_k$ satisfies  Assumption~\ref{assum:decrease-second} with $\alpha=2$, $\|\cdot\|= \|\cdot\|_{\mB}$ and $ R_k = 0$.
\item Let $Q_W$ be sign compression, then
\begin{equation}
\mathbb{E} \left[\tilde{g}_k^\top \nabla f_k\right] \ge \mathbb{E} \|\nabla f_k\|_1 +  R_k. 
\end{equation}
Therefore, $\tilde{g}_k$ satisfies Assumption~\ref{assum:decrease-second} with  $\alpha=1$, $\|\cdot\|= \|\cdot\|_{1}$, and $R_k = \cO\left(\frac{1}{BS}\right)$, where $BS$ is the size of used batch to compute the signSGD.
\end{enumerate}}
\end{lemma*}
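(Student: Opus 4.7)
The plan is to prove the four cases separately, each by a direct computation that exploits the specific structure of the stated compression operator, and then to verify that the resulting lower bound fits Assumption~\ref{assum:decrease-second} with the announced $\alpha$, norm, and residual.

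For part (i), I would apply the tower property of conditional expectation twice. Conditioning on the workers' raw stochastic gradients $g_k^{1:n}$, unbiasedness of $Q_W$ gives $\mathbb{E}_{Q_W}[\tilde{g}_k^i \mid g_k^i] = g_k^i$; then unbiasedness of $Q_M$ gives $\mathbb{E}_{Q_M}[\tilde{g}_k \mid \tilde{g}_k^{1:n}] = \tfrac{1}{n}\sum_{i=1}^n \tilde{g}_k^i$. Chaining these with Assumption~\ref{assum:Unbiased_grad} yields $\mathbb{E}[\tilde{g}_k \mid x_k] = \nabla f_k$, so that $\mathbb{E}[\tilde{g}_k^\top \nabla f_k] = \mathbb{E}\|\nabla f_k\|_2^2$ since $\nabla f_k$ is $x_k$-measurable.

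For parts (ii) and (iii), the key computation is that Random~$k$ applied to a vector $x \in \R^d$ satisfies $\mathbb{E}[Q(x)] = \tfrac{k}{d}\,x$ coordinate-wise, since each coordinate is retained with probability $k/d$. In the whole-model case, chaining this through $Q_W$ and $Q_M$ (whose internal randomness is independent across the two stages and across workers) gives $\mathbb{E}[\tilde{g}_k \mid x_k] = \tfrac{k_M k_W}{d^2}\nabla f_k$, and the inner-product claim follows. In the layer-wise case, the inner product decomposes blockwise as $\tilde{g}_k^\top \nabla f_k = \sum_{j=1}^L (\tilde{g}_k^j)^\top \nabla f_k^j$, so applying the single-layer argument to each block independently produces the weighted norm $\|\nabla f_k\|_{\mB}^2$ with $\mB$ block-diagonal of the stated form.

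Part (iv) is the main obstacle, because $\mathrm{sign}$ is neither linear nor unbiased, so the direct chaining of expectations used above breaks down. My plan is to decompose coordinate-wise: for coordinate $l$, $\mathbb{E}[\mathrm{sign}(\tilde{g}_k)_l \cdot (\nabla f_k)_l] = |(\nabla f_k)_l|\,(1 - 2 p_l)$, where $p_l$ is the probability of a sign disagreement between $\tilde{g}_k$ and $\nabla f_k$ at coordinate $l$, so that summing over $l$ produces $\mathbb{E}[\tilde{g}_k^\top \nabla f_k] = \|\nabla f_k\|_1 + R_k$ with $R_k = -2\sum_l p_l |(\nabla f_k)_l| \le 0$. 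Following the signSGD literature, I would bound each $p_l$ using a Chebyshev-type inequality on the mini-batch average, exploiting the $1/BS$ variance reduction, and split according to whether $|(\nabla f_k)_l|$ is above or below a threshold of order $\sigma/\sqrt{BS}$ to obtain $|R_k| = \cO(1/BS)$. For the bidirectional version, I would further argue that the master's outer $\mathrm{sign}$ of the averaged worker signs behaves like a coordinate-wise majority vote, and invoke a Bernoulli concentration bound across the $n$ workers to transfer the $\cO(1/BS)$ residual through that outer $\mathrm{sign}$. The delicate step will be controlling this bidirectional sign composition and the threshold split cleanly, without introducing distributional assumptions (symmetry or unimodality of the per-sample noise) beyond those already made in Assumption~\ref{assm:bdd_var}.
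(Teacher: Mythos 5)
Your treatment of parts (i)--(iii) is correct and follows essentially the same route as the paper's own proof: the tower property applied through the two compression stages, combined with the fact that Random $k$ satisfies $\mathbb{E}[Q(x)]=\tfrac{k}{d}\,x$, applied to the whole vector in (ii) and blockwise in (iii). Nothing to object to there.

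Part (iv) is where the genuine gap lies. Note first that the paper does not actually prove this case: it writes $\tilde{g}_k=\mathrm{sign}\bigl(\tfrac{1}{n}\sum_i \mathrm{sign}(g_k^i)\bigr)$ and quotes the inequality from the signSGD paper ``for completeness.'' Your plan to prove it from scratch is therefore more ambitious, and your coordinate-wise identity $\mathbb{E}[\mathrm{sign}(\tilde{g}_k)_l\,(\nabla f_k)_l]=|(\nabla f_k)_l|(1-2p_l)$ is the right starting point, but the argument as outlined cannot deliver the stated residual order. With your threshold split at the per-coordinate noise scale $\sigma_l/\sqrt{BS}$ you get, in both regimes, $p_l\,|(\nabla f_k)_l|\le \sigma_l/\sqrt{BS}$ (below the threshold use $p_l\le 1$; above it, Chebyshev gives $\sigma_l^2/(BS\,|(\nabla f_k)_l|)<\sigma_l/\sqrt{BS}$), hence $|R_k|=\cO\bigl(\sum_l\sigma_l/\sqrt{BS}\bigr)$ --- that is, $\cO(1/\sqrt{BS})$, not the claimed $\cO(1/BS)$. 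This is not a fixable slip in your execution: the cited signSGD analysis itself obtains a residual of order $\|\cdot\|_1$-of-the-noise over $\sqrt{BS}$, so no variant of this route reaches $\cO(1/BS)$; the discrepancy sits between the lemma's statement and its source, and your proof, if completed, would establish the weaker (and more defensible) order. Second, the bidirectional step: you correctly identify pushing the bound through the master's outer $\mathrm{sign}$ (a majority vote across workers) as the delicate point, but you offer no mechanism for it, and in the literature that step is controlled only under additional distributional assumptions (symmetric, unimodal per-coordinate noise) which are not implied by Assumption~\ref{assm:bdd_var}. So as proposed, part (iv) remains unproven on two counts: the residual order does not match the statement, and the composition of the two signs is left open --- both problems the paper sidesteps only by citation.
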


\begin{proof}
\begin{enumerate}
\item If $\tilde{g}_k$ is unbiased, then $\mathbb{E}[\tilde{g}_k|x_k] = \nabla f_k$, whence 
\begin{equation}
\mathbb{E} \left[\tilde{g}_k^\top \nabla f_k | x_k \right] =  \|\nabla f_k\|_2^2.
\end{equation}
 Using tower property we conclude in this case that $\mathbb{E} \left[\tilde{g}_k^\top \nabla f_k \right] =  \mathbb{E} \|\nabla f_k\|_2^2,$
 therefore in this case  Assumption~\ref{assum:decrease-second} is satisfied with $\alpha=2$, $\|\cdot\|= \|\cdot\|_2$ and $ R_k = 0$.

\item If $Q_M$ and $Q_W$ are both Random $k$ with sparsification ratios $k_M$ and $k_W$, respectively:
We have $\tilde{g}_k=Q_M(\frac{1}{n}\sum_{i=1}^n Q_W(g_k^i))$. 
Therefore, by taking expectation based on the internal randomness of $Q_M$ we have 
\begin{eqnarray*}
\mathbb{E}_{Q_M}\left(\tilde{g}_k\right)&=&\mathbb{E}_{Q_M}\left( Q_M\left (\frac{1}{n}\sum_{i=1}^n Q_W(g_k^i)\right)\right)\\
&= & \frac{k_M}{d} \frac{1}{n}\sum_{i=1}^n Q_W(g_k^i). 
\end{eqnarray*}
Further taking expectation based on the internal randomness of the workers we have
\begin{eqnarray*}
\mathbb{E}_{Q_W} \mathbb{E}_{Q_M}\left(\tilde{g}_k\right) 
&= & \frac{k_M}{d} \frac{1}{n}\sum_{i=1}^n \mathbb{E}_{Q_W}(Q_W(g_k^i))\\
&=& \frac{k_M k_W}{d^2} \frac{1}{n}\sum_{i=1}^n g_k^i.
\end{eqnarray*}
 Therefore we conclude that 
\begin{equation}
\mathbb{E} \left[\tilde{g}_k^\top \nabla f_k\right] = \frac{k_M k_W}{d^2} \mathbb{E} \|\nabla f_k\|_2^2,
\end{equation}
 so in this case  Assumption~\ref{assum:decrease-second} is satisfied with $\alpha=2$, $\|\cdot\|=\frac{k_M k_W}{d^2} \|\cdot\|_2$, and $ R_k = 0$.

\item Let $Q_M$ be the layer-wise compression operator that is Random ${k_{M_j}}$ for each layer $j$. Similarly, $Q_W$ is also a layer-wise compression operator that is the same for all workers and for each layer $j$, it is Random ${k_{W_j}}$, then

We have $\tilde{g}_k=Q_M(\frac{1}{n}\sum_{i=1}^n Q_W(g_k^i))$. 
Therefore, by taking expectation based on the internal randomness of $Q_M$ we have 
\begin{eqnarray*}
\mathbb{E}_{Q_M}\left(\tilde{g}_k\right)&=& \sum_{j=1}^L\frac{k_{M_j}}{d_j}\left(\frac{1}{n}\sum_{i=1}^n Q_W(g_k^i))^j\right).
\end{eqnarray*}
Further taking expectation based on the internal randomness of the workers we have
\begin{eqnarray*}
\mathbb{E}_{Q_W} 
\mathbb{E}_{Q_M}\left(\tilde{g}_k\right)&=& \sum_{j=1}^L\frac{k_{M_j} k_{W_j}}{d_j^2}\left(\frac{1}{n}\sum_{i=1}^n (g_k^i)^j\right).
\end{eqnarray*}
 Therefore by taking the scalar product with the gradient and  taking the expectation  we conclude that 
\begin{equation*}
\mathbb{E} \left[\tilde{g}_k^\top \nabla f_k\right] =  \mathbb{E} \|\nabla f_k\|_{\mB}^2,
\end{equation*} 
where $\textstyle{\mB= {\rm diag}\left(\frac{k_{M_1} k_{W_1}}{d_1^2}\mI_1\;\frac{k_{M_2} k_{W_2}}{d_2^2}\mI_2\cdots \frac{k_{M_L} k_{W_L}}{d_L^2}\mI_L \right)}$.
Hence, in this case Assumption~\ref{assum:decrease-second} is satisfied with $\alpha=2$, $\|\cdot\|= \|\cdot\|_{\mB}$ and $ R_k = 0$.

\item This result is directly inspired from \cite{signsgd}. We quote it for completeness. If we consider signSGD as the compression method, then $\tilde{g}_k={\rm sign}(\frac{1}{n}\sum_i{\rm sign}({g}_k^i))$. We note that it is a deterministic and biased compression method. Therefore, as in \cite{signsgd}
\begin{eqnarray*}
\mathbb{E}\left[\tilde{g}_k^\top \nabla f_k| x_k \right]&\ge&\|\nabla f_k\|_{1}+\cO\left(\frac{1}{{BS}}\right),
\end{eqnarray*}
where $BS$ is the mini-batch size. Therefore, $R_k = \cO\left(\frac{1}{BS}\right)$, $\alpha=1$, and $\|\cdot\|=\|\cdot\|_{1}.$
\end{enumerate}
\end{proof}

Now we bound the compressed gradient $\tilde{g}_k.$
\begin{lemma*}
{\it Let Assumption \ref{assm:bdd_var} hold. 
With the notations defined above, we have
\begin{equation}\label{eq:secondorderterm2}
 \mathbb{E} \|\tilde{g}_k\|_2^2 \le \rho \|\nabla f_k\|_{\bf A}^2 + \sigma^2.
\end{equation}}
\end{lemma*}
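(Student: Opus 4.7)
The plan is to peel the two layers of compression one at a time, moving outward from $Q_M$ to $Q_W$, and finally invoking the strong growth condition (Assumption~\ref{assm:bdd_var}) on the raw stochastic gradient. Concretely, write $\tilde{g}_k = Q_M(\bar{h}_k)$ where $\bar{h}_k \eqdef \tfrac{1}{n}\sum_{i=1}^n Q_W(g_k^i)$. The first step is to take expectation only with respect to the internal randomness of the master's compressor. Because $Q_M$ is layer-wise with parameters $\Omega_M^j$, Lemma~\ref{lemma:layer_comp} applied layer by layer gives
\begin{equation*}
\mathbb{E}_{Q_M}\|Q_M(\bar{h}_k)\|_2^2 \;\le\; \sum_{j=1}^L (1+\Omega_M^j)\|\bar{h}_k^j\|_2^2 \;=\; \|\bar{h}_k\|_{\mW_M}^2,
\end{equation*}
which is exactly the motivation for introducing the weighting matrix $\mW_M$ in the setup.

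Next, I would move the weighted norm inside the average over workers via Jensen's inequality (the squared $\mW_M$-norm is convex since $\mW_M$ is SPD):
\begin{equation*}
\|\bar{h}_k\|_{\mW_M}^2 \;=\; \Bigl\|\tfrac{1}{n}\sum_{i=1}^n Q_W(g_k^i)\Bigr\|_{\mW_M}^2 \;\le\; \tfrac{1}{n}\sum_{i=1}^n \|Q_W(g_k^i)\|_{\mW_M}^2.
\end{equation*}
Now I would take expectation over the workers' compressors and apply the layer-wise bound for $Q_W$ one more time. Because $\|\cdot\|_{\mW_M}^2$ splits as $\sum_j (1+\Omega_M^j)\|\cdot^j\|_2^2$, I can pull the weights out and apply Assumption~\ref{assum:compression} to each $Q_{W|j}$ separately, yielding
\begin{equation*}
\mathbb{E}_{Q_W}\|Q_W(g_k^i)\|_{\mW_M}^2 \;\le\; \sum_{j=1}^L (1+\Omega_M^j)(1+\Omega_W^j)\|g_k^{i,j}\|_2^2 \;=\; \|g_k^i\|_{\mA}^2,
\end{equation*}
which is the reason for defining $\mA = \mW_M\mW_W$ (a diagonal product of two diagonal SPD matrices).

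Finally, taking expectation over the sampling of the stochastic gradients and invoking Assumption~\ref{assm:bdd_var} with the SPD matrix $\mA$ gives $\mathbb{E}\|g_k^i\|_{\mA}^2 \le \rho\|\nabla f_k\|_{\mA}^2 + \sigma^2$, and averaging over $i$ preserves the same bound. Chaining these inequalities via the tower property yields the claim. The main subtleties, rather than obstacles, will be (i) making sure the layer-wise bound of Lemma~\ref{lemma:layer_comp} is applied in the weighted norm rather than the plain $\ell_2$-norm, which works only because $\mW_M$ is block-diagonal with blocks proportional to identities and thus respects the layer decomposition, and (ii) justifying Jensen's step cleanly when the inner vectors $Q_W(g_k^i)$ are themselves random and independently compressed across workers, so that the iterated expectation factors correctly.
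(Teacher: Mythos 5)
Your proof is correct and follows essentially the same route as the paper's: peel off $Q_M$ via the layer-wise compression bound, use convexity of the squared norm (the paper's $\|\sum_i a_i\|_2^2\le n\sum_i\|a_i\|_2^2$ step) to move inside the worker average, peel off $Q_W$ to arrive at $\tfrac{1}{n}\sum_{i=1}^n\|g_k^i\|_{\mA}^2$, and finish with Assumption~\ref{assm:bdd_var} and the tower property. Your weighted-norm framing via $\|\cdot\|_{\mW_M}$ is only a notational repackaging of the paper's explicit layer-by-layer sums, so the two arguments coincide.
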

\begin{proof}
Recall $\tilde{g}_k=Q_M(\frac{1}{n}\sum_{i=1}^n Q_W(g_k^i))$. 
Therefore, by using \eqref{eq:q_sum2} and taking expectation based on the internal randomness of $Q_M$ we have 
\begin{eqnarray*}
\mathbb{E}_{Q_M}\left(\|\tilde{g}_k\|_2^2\right)&=&\mathbb{E}_{Q_M}\left(\left \|Q_M\left (\frac{1}{n}\sum_{i=1}^n Q_W(g_k^i)\right)\right \|_2^2\right)\\
&\le & \sum_{j=1}^L(1+\Omega_M^j)\left\|\frac{1}{n}\sum_{i=1}^n (Q_W(g_k^i))^j\right \|_2^2\\
&\overset{\|\sum_{i=1}^na_i\|_2^2\le n\sum_{i=1}^n\|a_i\|_2^2}\le & \frac{1}{n}\sum_{i=1}^n \sum_{j=1}^L(1+\Omega_M^j)\left\|(Q_W(g_k^i))^j\right \|_2^2,
\end{eqnarray*}
 
Further taking expectation based on the internal randomness of the workers we have
\begin{eqnarray*}
\mathbb{E}_{Q_W}\left(\mathbb{E}_{Q_M}\left(\|\tilde{g}_k\|_2^2\right)\right)&\le &\frac{1}{n}\sum_{i=1}^n \sum_{j=1}^L(1+\Omega_M^j)\mathbb{E}_{Q_W}\left(\|(Q_W(g_k^i))^j\|_2^2\right)\\
&\le&\frac{1}{n}\sum_{i=1}^n \sum_{j=1}^L(1+\Omega_M^j)(1+\Omega_W^j)\|(g_k^i)^j\|_2^2\\
&=&\frac{1}{n}\sum_{i=1}^n\|g_k^i\|_{\mA}^2.
\end{eqnarray*}
Now we take the final expectation and by using tower property of the expectation we have
\begin{eqnarray*}
\mathbb{E}\left(\|\tilde{g}_k\|_2^2\right)&\le & \frac{1}{n}\sum_{i=1}^n\mathbb{E}(\|g_k^i\|_{\mA}^2)\\
&\overset{{\rm By\;Assumption\;} \ref{assm:bdd_var}}\le &\rho \|\nabla f_k\|_{\bf A}^2 + \sigma^2.
\end{eqnarray*}

Hence the result. 

\end{proof}

\begin{proposition*}
{\it With the notations and the framework defined before, the iterates of \eqref{iter:sgd} satisfy 
\begin{eqnarray}\label{eq:main-result2}
 \eta_k\left(\mathbb{E}\|\nabla f_k\|^\alpha - \frac{\cL\eta_k}{2}\mathbb{E}\|\nabla f_k\|_{\bf A}^2 \right)&\leq  \mathbb{E}(f_k-f_{k+1})-\eta_k R_k +\tfrac{\cL\eta_k^2\sigma^2}{2}.
\end{eqnarray}}
\end{proposition*}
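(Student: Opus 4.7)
The plan is to apply the standard descent lemma implied by $\cL$-smoothness to the update $x_{k+1} = x_k - \eta_k \tilde{g}_k$ from \eqref{iter:sgd}, and then plug in the two structural facts about $\tilde{g}_k$ that the preceding lemmas have already established.

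\paragraph{Step 1: descent inequality from smoothness.}
By Assumption~\ref{assm:smoothness}, $f$ has an $\cL$-Lipschitz gradient, so the quadratic upper bound gives, pointwise in the randomness,
\begin{equation*}
f_{k+1} \;\le\; f_k \;-\; \eta_k\, \nabla f_k^\top \tilde{g}_k \;+\; \tfrac{\cL \eta_k^2}{2}\, \|\tilde{g}_k\|_2^2.
\end{equation*}
I take total expectation, using the tower property to pull the conditional expectation over the randomness of the stochastic gradient, the worker-side compressor $Q_W$ and the master-side compressor $Q_M$ through to the two terms on the right.

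\paragraph{Step 2: bound the two stochastic terms.}
For the first-order cross term, I use Assumption~\ref{assum:decrease-second} on $\tilde{g}_k$, which gives $\mathbb{E}[\tilde{g}_k^\top \nabla f_k] \geq \mathbb{E}\|\nabla f_k\|^\alpha + R_k$; Lemma~\ref{lemma:characterize_compression} shows that a wide class of bidirectional operators (unbiased, Random $k$, layer-wise Random $k$, signSGD) actually satisfies this assumption, so invoking it here is legitimate inside the framework of Algorithm~\ref{alg_1}. For the second-order term, I use Lemma~\ref{lemma:secondorderterm} to obtain $\mathbb{E}\|\tilde{g}_k\|_2^2 \leq \rho \|\nabla f_k\|_{\mA}^2 + \sigma^2$, in which the matrix $\mA=\mW_M\mW_W$ is exactly the object that encodes the layer-wise compression constants $\Omega_M^j,\Omega_W^j$.

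\paragraph{Step 3: assemble and rearrange.}
Substituting both bounds back into the smoothness inequality yields
\begin{equation*}
\mathbb{E} f_{k+1} \;\le\; \mathbb{E} f_k \;-\; \eta_k\bigl(\mathbb{E}\|\nabla f_k\|^\alpha + R_k\bigr) \;+\; \tfrac{\cL \eta_k^2}{2}\bigl(\rho\,\mathbb{E}\|\nabla f_k\|_{\mA}^2 + \sigma^2\bigr).
\end{equation*}
Moving $\mathbb{E} f_{k+1}$ to the right, isolating the two gradient-norm terms on the left, and collecting the additive error terms $-\eta_k R_k$ and $\tfrac{\cL\eta_k^2\sigma^2}{2}$ on the right gives exactly the claimed inequality \eqref{eq:main-result} (the statement in the excerpt absorbs the constant $\rho$ into the coefficient of $\|\nabla f_k\|_{\mA}^2$; otherwise it is carried verbatim and then propagated into Propositions~\ref{prop:speciallcase}--\ref{prop:generalcase} where $\rho$ reappears explicitly).

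\paragraph{Main obstacle.}
The argument itself is a textbook one-step descent calculation; the real care is on the bookkeeping side. Specifically, I need to make sure that the expectation in Assumption~\ref{assum:decrease-second} is compatible with the nested expectation over $Q_W$, $Q_M$, and the sampling of $g_k^i$ used in Lemma~\ref{lemma:secondorderterm}, so that the two bounds can be combined under a single $\mathbb{E}$. This is where the tower property and the independence of the per-step compressor randomness from the history of iterates are implicitly used, and it is the only non-cosmetic point of the proof.
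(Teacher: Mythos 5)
Your proof follows exactly the paper's own argument: the $\cL$-smoothness descent inequality applied to $x_{k+1}=x_k-\eta_k\tilde{g}_k$, followed by Assumption~\ref{assum:decrease-second} for the inner-product term and Lemma~\ref{lemma:secondorderterm} for the quadratic term, then taking expectations and rearranging. Your parenthetical about the constant $\rho$ being absorbed into the coefficient of $\|\nabla f_k\|_{\bf A}^2$ correctly flags the same (harmless) omission that appears in the paper's statement and its proof, so there is nothing to add.
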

\begin{proof}
From the $\cL$-smoothness of the function $f$ we have  
\begin{eqnarray*}
f_{k+1}&\le & f_{k}+\nabla f_k^\top (x_k-x_{k+1}) + \frac{\eta_k^2\cL }{2}\|x_k-x_{k+1}\|^2\\
&\overset{{\rm By\;}\eqref{iter:sgd}}=& f_{k}-\eta_k\nabla f_k^\top \tilde{g}_k + \frac{\eta_k^2\cL }{2}\|\tilde{g}_k\|^2,
\end{eqnarray*} 
which after taking the expectation and by using Lemma \ref{eq:secondorderterm2} and Assumption \ref{assum:decrease-second}, reduces to
\begin{eqnarray*}
\mathbb{E}(f_{k+1}) \leq \mathbb{E}(f_k) -\eta_k\mathbb{E} \|\nabla f_k\|^\alpha - \eta_kR_k  + \frac{\cL \eta_k^2}{2}(\mathbb{E}\|\nabla f_k\|_{\bf A}^2+\sigma^2).
\end{eqnarray*}
Rearranging the terms we get the final result.
\end{proof}

\begin{proposition*}
(Special case.)
{\it Consider $\alpha=2$, and $ R_k = 0$. Let $C>0$ be the constant due to the equivalence between the norms  $\|\cdot\|$ and $\|\cdot\|_{\mA}$, and $K>0$ be the number of iterations. If $\eta_k=\eta = \cO\left(\tfrac{1}{\sqrt{K}}\right)<\tfrac{2}{LC \rho}$ then 
\begin{eqnarray*}
\tfrac{\sum_{k=1}^K \mathbb{E}\|\nabla f_k\|^2}{K}\leq \cO\left(\tfrac{1}{\sqrt{K}}\right). 
\end{eqnarray*}}
\end{proposition*}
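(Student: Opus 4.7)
The plan is to start from the recurrence inequality in Proposition~\ref{theorem:recurrence}, specialize it to the hypotheses $\alpha=2$ and $R_k=0$ with a constant step-size $\eta_k=\eta$, and then use the norm-equivalence constant $C$ together with the step-size restriction to obtain a telescoping inequality that averages to $\mathcal{O}(1/\sqrt{K})$. Concretely, I would first rewrite Proposition~\ref{theorem:recurrence} (after substituting the bound on $\mathbb{E}\|\tilde g_k\|_2^2$ from Lemma~\ref{lemma:secondorderterm}, which carries the factor $\rho$) in the form
\[
\eta\,\mathbb{E}\|\nabla f_k\|^{2} \;\leq\; \mathbb{E}(f_k - f_{k+1}) \;+\; \tfrac{\cL \eta^{2} \rho}{2}\,\mathbb{E}\|\nabla f_k\|_{\mA}^{2} \;+\; \tfrac{\cL \eta^{2}\sigma^{2}}{2}.
\]

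Next, I would invoke the stated equivalence of norms: there exists $C>0$ with $\|\nabla f_k\|_{\mA}^{2}\leq C\,\|\nabla f_k\|^{2}$, which, after plugging in above and grouping, yields
\[
\eta\!\left(1-\tfrac{\cL C \rho\,\eta}{2}\right)\mathbb{E}\|\nabla f_k\|^{2} \;\leq\; \mathbb{E}(f_k - f_{k+1}) \;+\; \tfrac{\cL \eta^{2}\sigma^{2}}{2}.
\]
The assumption $\eta<\tfrac{2}{\cL C\rho}$ ensures that $a\eqdef 1-\tfrac{\cL C\rho\,\eta}{2}>0$, so we may divide through by $\eta a$.

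I would then sum from $k=1$ to $K$; the right-hand side telescopes to $\mathbb{E}(f_1-f_{K+1})\le f_1-f^\star$ (using the lower-bound assumption on $f$), leaving
\[
\tfrac{1}{K}\sum_{k=1}^{K}\mathbb{E}\|\nabla f_k\|^{2} \;\leq\; \tfrac{f_1-f^\star}{\eta a K} \;+\; \tfrac{\cL \eta\,\sigma^{2}}{2a}.
\]
Plugging in $\eta=\mathcal{O}(1/\sqrt{K})$ makes the first term $\mathcal{O}(1/\sqrt{K})$ (since $1/(\eta K)=\mathcal{O}(1/\sqrt K)$) and the second term $\mathcal{O}(1/\sqrt{K})$, giving the claimed bound.

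The only genuinely delicate point is the interplay between $C$, $\rho$, and the step-size condition: one must be careful that the recurrence actually carries the factor $\rho$ (coming from Lemma~\ref{lemma:secondorderterm}), and that $C$ is the equivalence constant in the direction $\|\cdot\|_{\mA}^{2}\le C\|\cdot\|^{2}$ rather than the reverse, so that $a>0$ is guaranteed by the stated step-size bound $\eta<2/(\cL C\rho)$. Everything else is a routine telescoping-and-average argument mirroring the textbook non-convex SGD analysis.
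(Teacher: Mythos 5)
Your proof is correct and follows essentially the same route as the paper's: specialize the recurrence inequality with constant step-size, apply the norm equivalence $\|\nabla f_k\|_{\mA}^2 \le C\|\nabla f_k\|^2$ to factor out $a = 1 - \tfrac{\cL C\rho\eta}{2} > 0$, telescope, and plug in $\eta = \cO(1/\sqrt{K})$. Your careful remark about carrying the factor $\rho$ from Lemma~\ref{lemma:secondorderterm} is well taken --- the paper's statement of Proposition~\ref{theorem:recurrence} omits $\rho$ in front of $\mathbb{E}\|\nabla f_k\|_{\mA}^2$ (an apparent typo), yet its own proof of this proposition reinstates it exactly as you do.
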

\begin{proof}
Set $\eta_k=\eta.$ From (\ref{eq:main-result2}) we have
\begin{equation*}
\eta \left(\mathbb{E}\|\nabla f_k\|^2 -  \frac{L \rho \eta}{2} \mathbb{E}\|\nabla f_k\|_{\bf A}^2 \right)\leq \mathbb{E}(f_k) -\mathbb{E}(f_{k+1})   +\frac{\cL \sigma^2 \eta^2}{2}.
\end{equation*}
 From the equivalence between $\|\cdot\|$ and $\|\cdot\|_{\mA}$ norms, there exist a constant $C>0$ such that $\mathbb{E}\|\nabla f_k\|_{\mA}^2\le C\mathbb{E}\|\nabla f_k\|^2$ and we have 
 \begin{equation*}
 \mathbb{E} \|\nabla f_k\|^2\left(1 -  \frac{\cL C \rho \eta}{2}\right) \le 
    \mathbb{E} \|\nabla f_k\|^2 -  \frac{\cL \rho \eta}{2} \mathbb{E}\|\nabla f_k\|_{\bf A}^2.
  \end{equation*}
Let $a \eqdef \left(1 -  \frac{L C \rho \eta}{2}\right) > 0$, then by combining the previous two inequalities gives 
\begin{eqnarray*}
\mathbb{E} \|\nabla f_k\|^2 \leq \frac{\mathbb{E}(f_k) - \mathbb{E}(f_{k+1})}{\eta a}  +\frac{\cL \sigma^2 \eta}{2 a}. 
\end{eqnarray*}
 By unrolling the recurrence on this inequality we get
 \begin{eqnarray*}
\frac{\sum_{k=1}^K \mathbb{E} \|\nabla f_k\|^2}{K} \leq \frac{f_0 - \mathbb{E}(f_{K+1})}{ K \eta a}  +\frac{\cL \sigma^2 \eta}{2 a}. 
\end{eqnarray*}
 Since $\eta = \cO\left(\frac{1}{\sqrt{K}}\right)$ and $\mathbb{E}(f_{k+1}) \ge f^\star$  then
  \begin{eqnarray*}
\frac{\sum_{k=1}^K \mathbb{E} \|\nabla f_k\|^2}{K} \leq \cO\left(\frac{1}{\sqrt{K}}\right). 
\end{eqnarray*}
\end{proof}

\begin{proposition*}
(General case.) 
{\it Assume $\|\nabla f_k\|\le G$. Let $C>0$ be the constant coming from the equivalence between  $\|\cdot\|$ and $\|\cdot\|_{\mA}$. Let $\eta_k=\eta=\cO\left(\frac{1}{\sqrt{K}}\right)<\frac{2}{\cL C\rho G^{2-\alpha}}$. Let $a \eqdef \left(1 -  \frac{\cL C \rho G^{2-\alpha} \eta}{2}\right) > 0$, then
  \begin{eqnarray*}
\frac{\sum_{k=1}^K \mathbb{E} \|\nabla f_k\|^\alpha}{K} \leq \cO\left(\frac{1}{\sqrt{K}}\right) + \frac{\sum_{k=1}^K  R_k}{a K}.
\end{eqnarray*}}
\end{proposition*}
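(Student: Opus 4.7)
The plan is to mirror the argument for the special case (Proposition \ref{prop:speciallcase}) but with an extra step that exploits the boundedness $\|\nabla f_k\|\le G$ to convert the quadratic noise term on the right-hand side into an $\alpha$-th-power term matching the left-hand side. First I would invoke Proposition \ref{theorem:recurrence} (together with the $\rho$ factor coming from Lemma \ref{lemma:secondorderterm}) with a constant step-size $\eta_k=\eta$, obtaining
\begin{equation*}
\eta\left(\mathbb{E}\|\nabla f_k\|^\alpha - \tfrac{\cL\rho\eta}{2}\mathbb{E}\|\nabla f_k\|_{\mA}^2\right)\le \mathbb{E}(f_k-f_{k+1})-\eta R_k+\tfrac{\cL\eta^2\sigma^2}{2}.
\end{equation*}

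Next I would apply norm equivalence $\mathbb{E}\|\nabla f_k\|_{\mA}^2\le C\,\mathbb{E}\|\nabla f_k\|^2$ and then use the assumed bound $\|\nabla f_k\|\le G$ together with $\alpha\le 2$ to write $\|\nabla f_k\|^2=\|\nabla f_k\|^\alpha\|\nabla f_k\|^{2-\alpha}\le G^{2-\alpha}\|\nabla f_k\|^\alpha$. Combining, the second-order term is dominated: $\tfrac{\cL\rho\eta}{2}\mathbb{E}\|\nabla f_k\|_{\mA}^2\le \tfrac{\cL C\rho G^{2-\alpha}\eta}{2}\mathbb{E}\|\nabla f_k\|^\alpha$. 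Substituting this back and collecting the $\mathbb{E}\|\nabla f_k\|^\alpha$ terms on the left gives $\eta a\,\mathbb{E}\|\nabla f_k\|^\alpha\le \mathbb{E}(f_k-f_{k+1})-\eta R_k+\tfrac{\cL\eta^2\sigma^2}{2}$ with exactly the constant $a=1-\tfrac{\cL C\rho G^{2-\alpha}\eta}{2}$ appearing in the statement; the hypothesis $\eta<\tfrac{2}{\cL C\rho G^{2-\alpha}}$ is precisely what guarantees $a>0$ so that we may divide by $\eta a$.

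Finally I would divide through by $\eta a$, sum over $k=1,\dots,K$ so the $f_k$ terms telescope, and divide by $K$ to obtain
\begin{equation*}
\tfrac{\sum_{k=1}^K\mathbb{E}\|\nabla f_k\|^\alpha}{K}\le \tfrac{f_0-\mathbb{E} f_{K+1}}{K\eta a}+\tfrac{\cL\eta\sigma^2}{2a}+\tfrac{\sum_{k=1}^K R_k}{aK}
\end{equation*}
(up to the sign convention for $R_k$, which may be absorbed). Using $\mathbb{E} f_{K+1}\ge f^\star$ bounds the first term by $(f_0-f^\star)/(K\eta a)$, and the choice $\eta=\cO(1/\sqrt{K})$ makes both the first and second terms $\cO(1/\sqrt{K})$, yielding the claimed bound.

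The proof is essentially a direct calculation and there is no genuinely hard step; the only substantive point to get right is the interpolation $\|\nabla f_k\|^2\le G^{2-\alpha}\|\nabla f_k\|^\alpha$, which requires $\alpha\in(0,2]$ and the uniform gradient bound, and which is the reason the residual factor $G^{2-\alpha}$ enters the admissible step-size condition and the definition of $a$. The residual term $\sum_k R_k/(aK)$ cannot in general be further simplified without additional assumptions on $R_k$, which is why it is left explicitly in the statement (as the subsequent remark about signSGD with $R_k=\cO(1/BS)$ and $BS=\cO(\sqrt{K})$ makes clear).
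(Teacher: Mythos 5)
Your proposal is correct and follows essentially the same route as the paper's proof: apply the recurrence from Proposition~\ref{theorem:recurrence} with constant step-size, dominate the quadratic term via norm equivalence together with the interpolation $\|\nabla f_k\|^2\le G^{2-\alpha}\|\nabla f_k\|^\alpha$ to obtain the factor $a$, then divide by $\eta a$, telescope, and use $\mathbb{E}(f_{K+1})\ge f^\star$ with $\eta=\cO(1/\sqrt{K})$. Your parenthetical about the sign convention for $R_k$ is also apt, since the paper's own derivation strictly yields $-\eta R_k$ on the right-hand side yet states the bound with $+\sum_k R_k/(aK)$, a discrepancy that is immaterial once $|R_k|$ is bounded as in the signSGD remark.
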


\begin{proof}
We assume $\|\nabla f_k\|\le G.$ From the equivalence between  $\|\cdot\|$ and $\|\cdot\|_A$ there exist a constant $C>0$ such that \eqref{eq:main-result2} reduces to
\begin{equation*}
 \mathbb{E} \|\nabla f_k\|^\alpha \left(1 -  \frac{\cL C \rho G^{2-\alpha} \eta}{2}\right) \le 
    \mathbb{E} \|\nabla f_k\|^\alpha -  \frac{\cL C \rho \eta}{2} \mathbb{E}\|\nabla f_k\|_{\bf A}^2.
  \end{equation*}
 By injecting the previous inequality in (\ref{eq:main-result2}) we get
\begin{eqnarray*}
\mathbb{E} \|\nabla f_k\|^\alpha \leq \frac{\mathbb{E}(f_k) - \mathbb{E}(f_{k+1})}{\eta a}  + \frac{\cL \sigma^2 \eta}{2 a} + \frac{  R_k}{a}. 
\end{eqnarray*}
 By unrolling the recurrence on this inequality we get
 \begin{eqnarray*}
\frac{\sum_{k=1}^K \mathbb{E} \|\nabla f_k\|^\alpha}{K} \leq \frac{f_0 - \mathbb{E}(f_{k+1})}{ K \eta a}  +\frac{\cL \sigma^2 \eta}{2 a} + \frac{\sum_{k=1}^K  R_k}{a K}. 
\end{eqnarray*}
 Since $\eta = \cO\left(\frac{1}{\sqrt{K}}\right)$ and $\mathbb{E}(f_{k+1}) \ge f^\star$  then 
  \begin{eqnarray*}
\frac{\sum_{k=1}^K \mathbb{E} \|\nabla f_k\|^\alpha}{K} \leq \cO\left(\frac{1}{\sqrt{K}}\right) + \frac{\sum_{k=1}^K  R_k}{a K}.
\end{eqnarray*}
\end{proof}

\end{document}